\title{A Crevice on the Crane Beach:\\
  Finite-Degree Predicates}
\author{\IEEEauthorblockN{Michaël Cadilhac}
  \IEEEauthorblockA{WSI, Universität Tübingen\\
    Email: \texttt{michael@cadilhac.name}}
  \and
  \IEEEauthorblockN{Charles Paperman}
  \IEEEauthorblockA{WSI, Universität Tübingen\\
    Email: \texttt{charles.paperman@gmail.com}}}
\begin{document}

\maketitle

\begin{abstract}
  First-order logic (FO) over words is shown to be equiexpressive with FO
  equipped with a restricted set of numerical predicates, namely the order, a
  binary predicate $\MSBz*$, and the \emph{finite-degree} predicates:
  $\FO*[\arb] = \FO*[\leq, \MSBz, \fin]$.
  The Crane Beach Property (CBP), introduced more than a decade ago, is true of
  a logic if all the expressible languages admitting a neutral letter are
  regular.
  Although it is known that $\FO*[\arb]$ does not have the CBP, it is shown here
  that the (strong form of the) CBP holds for both $\FO*[\leq, \fin]$ and
  $\FO*[\leq, \MSBz]$.  Thus $\FO*[\leq, \fin]$ exhibits a form of locality and
  the CBP, and can still express a wide variety of languages, while being one
  simple predicate away from the expressive power of $\FO*[\arb]$.  The counting
  ability of $\FO*[\leq, \fin]$ is studied as an application.
\end{abstract}

\IEEEpeerreviewmaketitle

\section{Introduction}

Ajtai~\cite{ajt83} and Furst, Saxe, and Sipser~\cite{furst-saxe-sipser84} showed
some 30 years ago that \parity, the language of words over $\{0, 1\}$ having an
even number of $1$, is not computable by families of shallow circuits, namely
\ACz circuits.  Since then, a wealth of precise expressiveness properties of
\ACz has been derived from this sole
result~\cite{barrington-compton-straubing-therien92,straubing94}.  Naturally
aiming at a better understanding of the core reasons behind this lower bound, a
continuous effort has been made to provide alternative proofs of
$\parity \notin \ACz$.  However, this has been a rather fruitless endeavor, with
the notable exception of the early works of Razborov~\cite{raz87} and
Smolenski~\cite{smo87} that develop a less combinatorial approach with an
algebraic flavor.  For instance, Koucký~et~al.~\cite{kopolath06} foray into
descriptive complexity and use model-theoretic tools to obtain
$\parity \notin \ACz$, but assert that ``contrary to [their] original hope,
[their] Ehrenfeucht-Fraïssé game arguments are not simpler than classical lower
bounds.''  More recent promising approaches, especially the topological ones
of~\cite{Gehrke2010,Czarnetzki2016}, have yet to yield strong lower bounds.

A different take originated from a conjecture of Lautemann and Thérien,
investigated by Barrington~et~al.~\cite{baimlascth05}: the Crane Beach
Conjecture.  They noticed that the letter $0$ acts as a \emph{neutral letter}
in \parity, i.e., $0$ can be added or removed from any word without affecting
its membership to the language.  If a circuit family recognizes a language with
a neutral letter, it seems convincing that the circuits for two given input
sizes should look very similar, that is: the circuit family must be highly
\emph{uniform}.  It was thus conjectured that all neutral letter languages in
\ACz were regular, and this was disproved in~\cite{baimlascth05}.

This however sparked an interest in the study of neutral letter languages, in
particular from the descriptive complexity view.  Indeed, \ACz circuits
recognize precisely the languages expressible in $\FO[\arb]$, where $\arb$
denotes all possible \emph{numerical predicates} (expressing numerical
properties of the positions in a word).  Further, as all regular neutral letter
languages of $\FO[\arb]$ are star-free~\cite{baimlascth05}, i.e.,
in~$\FO[\leq]$, the Crane Beach Conjecture asked:

\centerline{\emph{Are all neutral letter languages of $\FO[\arb]$ in $\FO[\leq]$?}}

Note that this echoes the above intuition on uniformity, since the numerical
predicates correspond precisely to the allowed power to compute the circuit for
a given input length~\cite{bela06}.  The intuition on the logic side is even more
compelling: if a letter can be introduced anywhere without impacting membership,
then the only meaningful relation that can relate positions is the linear order.
However, first-order logic can ``count'' up to $\log n$ (see, e.g.,
\cite{durand07}), meaning that even within a word with neutral letters,
$\FO[\arb]$ can assert some property on the number of nonneutral letters.  This
is, in essence, why nonregular neutral letter languages can be expressed in
$\FO[\arb]$.

In the recent years, a great deal of efforts was put into studying the Crane
Beach Property in different logics, i.e., whether the definable neutral letter
languages are regular.  Krebs and Sreejith~\cite{krebs12}, building on the work
of Roy and Straubing~\cite{roy-straubing07}, show that all first-order logics
with monoidal quantifiers and $+$ as the sole numerical predicate have the Crane
Beach Property.  Lautemann~et~al.~\cite{laut06} show Crane Beach Properties for
classes of bounded-width branching programs, with an algebraic approach relying
on communication complexity.  Some expressiveness results were also derived from
Crane Beach Properties, for instance Lee~\cite{lee03} shows that $\FO[+]$ is
strictly included in $\FO[\leq, \times]$ by proving that only the former has the
Crane Beach Property.  Notably, all these logics are quite far from full
$\FO[\arb]$, and in that sense, fail to identify the part of the arbitrary
numerical predicates that fit the intuition that they are rendered useless by
the presence of a neutral letter.

In the present paper, we identify a large class of predicates, the
\emph{finite-degree} predicates, and a predicate $\MSBz$ such that \emph{any
  numerical predicate} can be first-order defined using them and the order; in
symbols, $\FO[\leq, \MSBz, \fin] = \FO[\arb]$.  We show that, strikingly, both
$\FO[\leq, \MSBz]$ and $\FO[\leq, \fin]$ have the Crane Beach Property, this
latter statement being our main result.  Hence showing that some nonregular
neutral letter language is not expressible in $\FO[\arb]$ could be done by
showing that $\MSBz$ may be removed from any $\FO[\leq, \MSBz, \fin]$ formula
expressing it.

The proof for the Crane Beach Property of $\FO[\leq, \fin]$ relies on a
communication complexity argument different from that of~\cite{laut06}.  It is
also unrelated to the database collapse techniques of~\cite{baimlascth05}
(succinctly put, no logic with the Crane Beach Property has the so-called
\emph{independence property}, i.e., can encode arbitrary large sets).  We will
show that in fact $\FO[\leq, \fin]$ \emph{does} have the independence property.
This provides, to the best of our knowledge, the first example of a logic that
exhibits \emph{both} the independence and the Crane Beach properties.

The aforementioned counting property of $\FO[\arb]$ led to the
conjecture~\cite{baimlascth05,lee03} that a logic has the Crane Beach Property
\emph{if and only if} it cannot count beyond a constant.  To the best of our
knowledge, neither of the directions is known; we show however that
$\FO[\leq, \fin]$ can only count up to a constant, by showing that it cannot
even express very restricted forms of the addition. This adds evidence to the
``if'' direction of the conjecture.

\emph{Structure of the paper.}\quad In Section~\ref{sec:prelim}, we introduce
the required notions, although some familiarity with language theory and logic
on words is assumed (see, e.g., \cite{straubing94}).  In
Section~\ref{sec:foarb}, we show that $\FO[\leq, \MSBz, \fin] = \FO[\arb]$.  In
Section~\ref{sec:fomsb}, we present a simple proof, relying on a much harder
result from~\cite{baimlascth05}, that $\FO[\leq, \MSBz]$ has the Crane Beach
Property.  The failing of the aforementioned collapse technique for
$\FO[\leq, \fin]$ is shown in Section~\ref{sec:finindep}.  We tackle the Crane
Beach Property of $\FO[\leq, \fin]$, our main result, in
Section~\ref{sec:cbcfin}, after the necessary tools have been developed.
Finally, in Section~\ref{sec:count}, we focus on the counting inabilities of
$\FO[\leq, \fin]$.

\emph{Previous works.}\quad Finite-degree predicates were introduced by the
second author in~\cite{paperman15}, in the context of two-variable logics.
Therein, it is shown that the two-variable fragment of $\FO[\leq, \fin]$ has the
Crane Beach Property, and, even stronger, that the neutral letter languages
expressible with $k$ quantifier alternations  can be expressed without the finite-degree
predicates with the \emph{same} amount of quantifier alternations.  The techniques used
in~\cite{paperman15} are specific to two-variable logics, relying heavily on the
fact that each quantification depends on a \emph{single} previously quantified
variable.  We thus stress that the communication complexity argument developed
in Section~\ref{sec:cbcfin} is unrelated to~\cite{paperman15}.
 
The fact that two sets of predicates can both verify the Crane Beach Property
while their union does not has already been witnessed in~\cite{baimlascth05}.
Indeed, letting $\mon$ be the set of \emph{monoidal} numerical predicates, the
Property holds for both $\FO[\leq, +]$ and $\FO[\leq, \mon]$ but fails for
$\FO[\leq, +, \mon]$, although this latter class is less expressive than
$\FO[\arb]$ (this can be shown using the same proof
as~\cite[Proposition~5]{kopolath06}).

\section{Preliminaries}\label{sec:prelim}
\subsection{Generalities} We write $\bbn = \{0, 1, 2, \ldots\}$ for the set of
nonnegative numbers.  For $n \in \bbn$, we let $[n] = \{0, 1, \ldots, n - 1\}$.
A function $f\colon \bbn \to \bbn$ is \emph{nondecreasing} if $m > n$ implies
$f(m) \geq f(n)$.

An alphabet $A$ is a finite set of letters (symbols), and we write $A^*$ for
the set of finite words.  For $u = u_0u_1\cdots u_{n-1}$, the length $n$ of $u$
is denoted $|u|$.  We write $\eps$ for the empty word and $A^{\leq k}$ for words
of length $\leq k$.

\subsection{Logic on words} For an alphabet $A$, let $\sigma_A$ be the
vocabulary $\{\lt a \mid a \in A\}$ of unary \emph{letter predicates}.  A
(finite) word $u = u_0u_1\cdots u_{n-1} \in A^*$ is naturally associated with
the structure over $\sigma_A$ with universe $[n]$ and with $\lt a$ interpreted
as the set of positions $i$ such that $u_i = a$, for any $a \in A$.  A
\emph{numerical predicate} is a $k$-ary relation symbol together with an
interpretation in $[n]^k$ for each possible universe size $n$.  Given a formula
$\phi$ that relies on some numerical predicates and a word $u$, we write
$u \models \phi$ to mean that $\phi$ is true of the $\sigma_A$-structure for $u$
augmented with the interpretations of the numerical predicates for the universe
of size $|u|$.  A formula $\phi$ thus \emph{defines} or \emph{expresses} the
language $\{u \in A^* \mid u \models \phi\}$.

\subsection{Classes of formulas} We let $\arb$ be the set of all numerical
predicates.  Given a set $\calN \subseteq \arb$, we write $\FO[\calN]$ for the
set of first-order formulas built using the symbols from $\calN \cup \sigma_A$,
for any alphabet $A$.  Similarly, $\MSO[\calN]$ denotes monadic second-order
formulas built with those symbols.  We further define the quantifiers $\Maj$ and
$\exists^\equiv_i$, for $i \in \bbn$, that will only be used in discussions:
\begin{itemize}
\item $u \models (\Maj x)[\phi(x)]$ iff there is strict majority of positions
  $i \in [|u|]$ such that $\langle u, x := i\rangle \models \phi$;
\item $u \models (\exists^\equiv_i)[\phi(x)]$ iff the number of positions $i \in
  [|u|]$ verifying $\langle u, x := i\rangle \models \phi$ is a multiple of $i$.
\end{itemize}
We will write $\MAJ[\calN]$ and $\FOMAJ[\calN]$ with the obvious meanings.
Further, $\FOMOD[\calN]$ allows all the quantifiers $\exists^\equiv_i$ in
$\FO[\calN]$ formulas.

\subsection{On numerical predicates} 

The most ubiquitous numerical predicate here will be the binary order predicate
$\leq$.  The predicate that zeroes the most significant bit (MSB) of a number
will also be important: $(m, n) \in \MSBz$ iff
$n = m - 2^{\lfloor \log m\rfloor}$.  Note that both predicates do \emph{not}
depend on the universe size, and we single out this concept:
\begin{definition}
  A $k$-ary numerical predicate $P$ is \emph{unvaried} if there is a set
  $E \subseteq \bbn^k$ such that the interpretation of $P$ on universes of size
  $n$ is $E \cap [n]^k$.  In this case, we identify $P$ with the set $E$.  It is
  \emph{varied} otherwise.\footnote{The relevance of this concept has been noted
    in previous works (e.g., ~\cite{baimlascth05}), but was left unnamed.  The
    second author used in~\cite{paperman15} the terms \emph{(non)uniform}, an
    unfortunate coinage in this context.  We prefer here the less conflicting
    terms \emph{(un)varied}.}  We write $\arbu$ for the set of unvaried
  numerical predicates.
\end{definition}
Naturally, any varied predicate can be converted to an unvaried one by turning
the universe length into an argument and quantifying the maximum position; this
implies in particular that $\FO[\arb] = \FO[\arbu]$.  This is however not
entirely innocuous, as will be discussed in Section~\ref{sec:count}.

We will rely on the following class of unvaried predicates, generalizing a
definition of~\cite{paperman15} (see also the older notion of ``finite
formula''~\cite{kps86}):
\begin{definition}
  An unvaried predicate $P \subseteq \bbn^k$ is of \emph{finite
    degree}\footnote{The name stems from the fact that the hypergraph defined by
    $P$, with edges of size $k$, is of finite degree.} if for all $n \in \bbn$,
  $n$ appears in a finite number of tuples in $P$.  We write $\fin$ for the
  class of such predicates.
\end{definition}
Note that this does \emph{not} imply that there is a $N$ that bounds the number
of appearance for all $n$'s.  Some examples:
\begin{itemize}
\item $\MSBz$ is \emph{not} a finite-degree predicate, as, e.g., $(2^n, 0) \in
  \MSBz$ for any $n$, hence $0$ appears infinitely often;
\item Any \emph{unvaried} monadic numerical predicate is of finite degree, this
  implies in particular that any language over a unary alphabet is expressed by
  a $\FO[\leq, \fin]$ formula;
\item The graph of any nondecreasing unbounded function $f\colon \bbn \to \bbn$
  defines a finite-degree predicate, since $f^{-1}(n)$ is a finite set for all
  $n$;
\item The order, sum, and multiplication are not of finite degree;
\item One can usually ``translate'' unvaried predicates to make them finite
  degree; for instance, the predicate true of $(x,y)$ if $y-x < x < y$ is of
  finite degree, see also the proof of Proposition~\ref{prop:count}.
\end{itemize}



\subsection{Crane Beach Property} A language $L \subseteq A^*$ is said to have a
neutral letter if there is a $e \in A$ such that adding or deleting $e$ from a
word does not change its membership to $L$.
Following~\cite{laut06}, we say that a logic has the Crane
Beach Property if all the neutral letter languages it defines are regular.  We
further say that it has the \emph{strong} Crane Beach Property if all the
neutral letter languages it defines can be defined using order as the sole
numerical predicate.

\section{\(\FO*[\arb]\) and \(\FO*[\leq, \MSBz, \fin]\) define the same
  languages}\label{sec:foarb}

In this section, we express all the numerical predicates using only
finite-degree ones, $\MSBz$, and the order.  The result is a variant of
\cite[Theorem~3]{paperman15}, where it is proven for the two-variable fragment,
and on neutral letter languages.
\begin{theorem}\label{thm:fo}
  $\FO[\arb]$ and $\FO[\leq, \MSBz, \fin]$ define the same languages.
\end{theorem}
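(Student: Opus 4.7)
The inclusion $\FO[\leq, \MSBz, \fin] \subseteq \FO[\arb]$ is immediate, as $\leq$, $\MSBz$, and every finite-degree predicate belong to $\arb$ by definition. For the converse, I would invoke the preliminary equality $\FO[\arb] = \FO[\arbu]$ and reduce the problem to expressing an arbitrary unvaried $k$-ary predicate $P \subseteq \bbn^k$ inside $\FO[\leq, \MSBz, \fin]$.

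My plan is to produce a formula of the shape
\begin{equation*}
  \psi_P(x_1, \ldots, x_k) \;\equiv\; \exists y_1 \cdots \exists y_m.\;
  \Phi(\vec x, \vec y) \;\land\; R_P(\vec x, \vec y),
\end{equation*}
where $\vec y$ quantifies a tuple of auxiliary positions that serve as \emph{bit-witnesses} for the $x_i$'s. Concretely, for each $x_i$ some of the $y_j$'s record the successive MSBs peeled off $x_i$ (accessed as $z - \MSBz(z)$ along the $\MSBz$-chain of $x_i$) and companion powers of $2$ bracketing $x_i$. The subformula $\Phi$ is a quantifier-free conjunction of $\leq$- and $\MSBz$-atoms that forces the $\vec y$'s to be the canonical bit-witnesses of $\vec x$, and $R_P$ is a single finite-degree relation on $(\vec x, \vec y)$ that reads membership in $P$ off the extended tuple. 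The two-variable, neutral-letter version of this strategy, worked out in~\cite[Theorem~3]{paperman15}, provides a blueprint; the general case requires enlarging the witness tuple to handle multiple free variables and dropping the neutral-letter hypothesis.

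The main obstacle will be showing that $R_P$ can be chosen to be genuinely finite degree. A direct encoding of $P$ over the $\vec x$'s alone fails, since a fixed low-valued coordinate is compatible with unboundedly many completing tuples of $P$. My remedy is to pad every tuple of $R_P$ with cross-linked scale data --- powers of $2$ bracketing each $x_i$, together with comparison witnesses between the various $x_i$'s --- chosen so that fixing any single entry of the extended tuple pins down a scale, which in turn confines every other entry to a bounded interval. This packing is where the combinatorial work sits; once it is set up, finite-degree follows by a routine counting check, and correctness of $\psi_P$ reduces to verifying that $\Phi$ determines $\vec y$ uniquely from $\vec x$ and that $R_P$ agrees with $P$ on those canonical witnesses.
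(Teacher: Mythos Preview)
Your proposal has a genuine gap in the finite-degree claim for $R_P$. The difficulty is that the original coordinates $\vec x$ remain inside the extended tuple $(\vec x, \vec y)$, and no amount of padding can stop a small $x_i$ from recurring in infinitely many tuples. Concretely, take $P = \{(0, n) : n \in \bbn\}$. For every $n$, the formula $\psi_P(0, n)$ must hold, so there is a canonical witness $\vec y_n$ with $(0, n, \vec y_n) \in R_P$; hence $0$ occurs in infinitely many tuples of $R_P$, and $R_P$ is not of finite degree. Your claim that ``fixing any single entry pins down a scale'' fails exactly here: fixing the entry $x_1 = 0$ carries no scale information whatsoever about $x_2$, and the scale witnesses you propose to add (bracketing powers of~$2$ for $x_2$, cross-comparison data, etc.) are \emph{other} coordinates of the tuple, whose values are not constrained by fixing the coordinate $x_1$. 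Adding more coordinates can only make the degree worse at the small entries, never better.

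The paper's proof sidesteps this by \emph{not} keeping the raw $x_i$'s inside the finite-degree relation at all. It relativises every quantifier to a ``work zone''---the second quarter of $[\,|u|\,]$, in which all positions share the same MSB---and uses $\MSBz$ (together with $\leq$) \emph{outside} the new predicate to translate each variable between its actual zone and its work-zone surrogate. The transformed predicate $P^{(i,j)}$ only ever receives work-zone arguments and is defined to reject whenever its inputs do not share an MSB; once any single argument is fixed, that shared MSB confines every other argument to an interval $[2^{m}, 2^{m+1})$, so finite degree is immediate. The decisive idea you are missing is thus to \emph{replace} potentially small coordinates by scale-aligned surrogates (via the non-finite-degree predicate $\MSBz$) rather than merely to annotate them.
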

\begin{proof}
  We show that any $\FO[\arbu]$ language is definable in
  $\FO[\leq, \MSBz, \fin]$.
  
  The main idea is to divide the set of word positions in four contiguous zones
  and have the variables range over only the second zone, called the \emph{work
    zone}.  Given an input of length $\ell=2^n$, the set of positions $[\ell]$
  is divided in four zones of equal size $2^{n-2}$; if the input length is not a
  power of $2$, then we apply the same split as the closest greater power of
  two, leaving the third and fourth zone possibly smaller than the first two.

  As an example, suppose that the word size is $\ell = 11110$ (here and in the
  following, we write numbers in binary).  The four zones of $[\ell]$ will be:
  \def\elt#1{\text{#1)~~}}
  \begin{align*}
    \elt{1} 00000 \to 00111; &\qquad \elt{2} 01000 \to 01111;\\
    \elt{3} 10000 \to 10111; &\qquad \elt{4} 11000 \to 11101=\ell - 1\enspace.
  \end{align*}
  
  The work zone has two salient properties: 1. Checking that a number
  $k \in [\ell]$ belongs to it amounts to checking that $k$ has exactly one
  greater power of two; in particular, two work-zone positions share the same
  MSB; 2. Any number in $[\ell]$ outside the work zone can be obtained by
  replacing the MSB of a number in the work zone with some other bits (0, 10,
  and 11, for the first, third, and fourth zone, respectively); we call this a
  \emph{translation to a zone}, e.g., in our example above, $10101$ is the
  translation of $01101$ to the third zone.

  \def\work{\mathsf{work}}%
  \def\trans{\mathsf{trans}}%
  More formally, we can define a formula $\work(x)$ which is true iff $x$
  belongs to the work zone, by expressing that there is exactly one power of two
  strictly greater than $x$, using the monadic predicate true on powers of two.
  Moreover, we can define formulas $\trans^{(i)}(x, y)$, $1 \leq i \leq 4$,
  which are true if $x$ is in the work zone and $y$ is its translation to the
  $i$-th zone; let us treat the case $i = 3$, the others being similar.  The
  formula $\trans^{(3)}(x, y)$ is true if $y$ is obtained by replacing the MSB
  of $x$ with $10$, this is expressed using $\MSBz$ by finding $z$ such that
  $\MSBz(x, z)$ holds and then checking that $y$ is the first value $z'$
  strictly greater than $x$ such that $\MSBz(z', z)$ holds.

  The strategy will then be to: 1. Quantify over the work zone only; 2. Modify
  the predicates to internally change the MSBs according to which zone the
  variables were supposed to belong; 3. Compute the translations of the
  variables for the letter predicates.  Step 1 relies on $\work$ and
  $\trans^{(i)}$, Step 2 transforms all numerical predicates to finite-degree
  ones, and Step 3 simply uses $\trans^{(i)}$.

  Let $\phi \in \FO[\arbu]$. \emph{Step 1.}  We rewrite $\phi$ with
  \emph{annotated} variables; with $x$ a variable, we write $x^{(i)}$,
  $1 \leq i \leq 4$, to mean ``$x$ translated to zone $i$''---as all the
  variables will be quantified in the work zone, this is well defined.
  The following rewriting is then performed:
  \begin{align*}
    &\exists x\;\psi(x) \leadsto \\
    &\qquad \exists x \Big[\work(x) \land \bigvee_{1 \leq i
      \leq 4}\Big[
      (\exists y)[\trans^{(i)}(x, y)] \land \psi(x^{(i)}) \big] \Big]\enspace,
  \end{align*}
  and \emph{mutatis mutandis} for $\forall$.

  \emph{Step 2.} We sketch this step for binary numerical predicates.  Suppose
  such a predicate $P$ is used in $\phi$.  For $1 \leq i, j \leq 4$, we define
  the predicate $P^{(i, j)}$ that expects two work-zone positions, translates
  them to the $i$-th and $j$-th zone, respectively, then checks whether they
  belong to $P$.  Crucially, as the inputs are work-zone positions, $P^{(i,j)}$
  immediately rejects if they do not share the same MSB: it is thus a
  finite-degree predicate.  Now every occurrence of $P(x^{(i)}, y^{(j)})$ in
  $\phi$ can be replaced by $P^{(i,j)}(x, y)$.

  \emph{Step 3.} The only remaining annotated variables appear under letter
  predicates.  To evaluate them, we simply have to retrieve the translated
  position.  Hence each $\lt{a}(x^{(i)})$ will be replaced by
  $(\exists y)[\trans^{(i)}(x, y) \land \lt a(y)]$, concluding the proof.
\end{proof}

\begin{remark}
  Theorem~\ref{thm:fo} can be shown to hold also for
  $\FOMAJ[\leq, \MSBz, \fin]$, i.e., this logic is equiexpressive with
  $\FOMAJ[\arb]$.  The main modification to the proof is to allow
  \emph{arbitrary} quantifications (as opposed to work zone ones only) and
  \emph{compute} the work zone equivalent of each position before checking the
  numerical predicates.  This ensures that the number of positions verifying a
  formula is not changed.  Likewise, $\FOMOD[\leq, \MSBz, \fin]$ is equivalent
  with $\FOMOD[\arb]$.
\end{remark}

\section{\(\FO[\leq, \MSBz]\) has the Crane Beach Property}\label{sec:fomsb}

Following a short chain of rewriting, we will express $\MSBz$ using predicates
that appear in~\cite{baimlascth05} and conclude that:
\begin{theorem}
  $\FO[\leq, \MSBz]$ has the strong Crane Beach Property.
\end{theorem}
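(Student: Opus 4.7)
The plan is to reduce the strong Crane Beach Property for $\FO[\leq, \MSBz]$ to a strong CBP result already contained in~\cite{baimlascth05}. The rough shape of the argument is: find a class $\calC$ of numerical predicates such that (a) $\FO[\leq, \calC]$ has the strong CBP by~\cite{baimlascth05}, and (b) $\MSBz$ is first-order definable from $\leq$ and predicates in $\calC$. Part (b) gives the syntactic inclusion $\FO[\leq, \MSBz] \subseteq \FO[\leq, \calC]$, and the strong Crane Beach Property then transfers to the smaller logic at no extra cost.

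Concretely, I would first isolate the CBP statement from~\cite{baimlascth05} that we intend to invoke; the most natural candidate is a class of ``arithmetically tame'' predicates, for instance all monadic numerical predicates (in particular the power-of-two predicate) possibly augmented by a light binary auxiliary. Second, I would unfold $\MSBz(m,n)$ into its arithmetic content: there exists a power of two $p$ with $p \leq m < 2p$ such that $n = m - p$ (and automatically $n < p$). Third, I would express the conjunct ``$n = m - p$ with $p$ a power of two and $p \leq m < 2p$'' by a short first-order formula over $\leq$ and the chosen predicates, which is precisely the ``short chain of rewriting'' alluded to in the excerpt. Composing these rewrites places every $\FO[\leq, \MSBz]$-formula inside $\FO[\leq, \calC]$, at which point the strong CBP of~\cite{baimlascth05} finishes the argument.

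The hard part, as I see it, is step three: identifying a class $\calC$ in~\cite{baimlascth05} that is simultaneously rich enough to cleanly express the bit-clearing operation of $\MSBz$ and restricted enough that the strong CBP is actually preserved. Note that if one is too generous and allows, say, the full \textsc{BIT} predicate or addition, one ends up inside $\FO[\arb]$ or $\FO[\leq,+]$ where the CBP provably fails, so the choice of $\calC$ is delicate. Given the self-advertised brevity of the rewriting, however, I expect that once the correct target result is located, the manipulation is essentially syntactic; the bulk of the ingenuity has been spent in the ``much harder'' predecessor result from~\cite{baimlascth05} that is being reused, and our job reduces to pattern-matching $\MSBz$ against it.
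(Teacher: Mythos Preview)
Your overall strategy is exactly the paper's: locate a class $\calC$ from~\cite{baimlascth05} with the strong CBP and show that $\MSBz$ is $\FO[\leq, \calC]$-definable. The gap is that you have not identified $\calC$, and your reasoning about where to look contains a factual error that would steer you away from it. You assert that allowing addition is ``too generous'' because it lands you in $\FO[\leq,+]$, ``where the CBP provably fails.'' This is false: $\FO[\leq,+]$ \emph{does} have the strong Crane Beach Property (this is one of the principal positive results of~\cite{baimlascth05}, and is recalled in the introduction of the present paper). Ruling out addition is precisely the wrong move.

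The paper's target is $\calC = \{+, F\}$, where $F$ is the graph of $f(n) = 2^{(\lfloor \log n\rfloor^2)}$; Corollary~4.14 of~\cite{baimlascth05} gives the strong CBP for $\FO[\leq, +, F]$. With addition available, the rewriting of $\MSBz$ is immediate along the lines you already sketched: the monadic power-of-two predicate $Q$ is definable from $F$ via $n \in Q \Leftrightarrow f(n-1) \neq f(n)$, the greatest power of two $p \leq m$ is then definable in $\FO[\leq, Q]$, and $\MSBz(m,n)$ holds iff $n + p = m$. So the ``delicate choice'' you worried about dissolves once one realizes that addition is harmless; the only genuine subtlety is choosing the auxiliary $F$ so that powers of two become definable while the CBP still holds, and that is exactly what Corollary~4.14 of~\cite{baimlascth05} supplies.
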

\begin{proof}
  Let $f\colon \bbn \to \bbn$ be defined by
  $f(n) = 2^{\left(\lfloor \log n\rfloor^2\right)}$, and let $F\subseteq \bbn^2$ be its graph.
  Barrington~et~al.~\cite[Corollary 4.14]{baimlascth05} show that $\FO[\leq, +, F]$ has
  the strong Crane Beach Property; we show that $\MSBz$ can be expressed in that
  logic.  First, the monadic predicate $Q = \{2^n \mid n \in \bbn\}$ is
  definable in $\FO[\leq, F]$, since $n$ is a power of two iff
  $f(n-1) \neq f(n)$.  Second, given $n \in \bbn$, the greatest power of two
  smaller than $n$ is $p = 2^{\lfloor \log n\rfloor}$, which is easy to find in
  $\FO[\leq, Q]$.  Finally, $\MSBz(n, m)$ is true iff $m + p = n$, and is thus
  definable in $\FO[\leq, +, F]$.
\end{proof}

\begin{remark}
  From Lange~\cite{lange04}, $\MAJ[\leq]$ and $\FOMAJ[\leq, +]$ are
  equiexpressive, and as $\MSBz$ is expressible using the unary predicate
  $\{2^n \mid n \in \bbn\}$ and the sum, this shows that $\MAJ[\leq, \fin]$ is
  equiexpressive with $\FOMAJ[\arb]$.  Hence $\MAJ[\leq, \fin]$ does \emph{not}
  have the strong Crane Beach Property.
\end{remark}

\section{\(\FO*[\leq, \fin]\) has the Independence
  Property}\label{sec:finindep}

In~\cite{baimlascth05}, an important tool is introduced to show Crane Beach
Properties, relying on the notion of \emph{collapse} in databases,
see~\cite[Chapter~13]{libkin04} for a modern account.  Specifically, let us
define an ad-hoc version of the:
\begin{definition}[Independence property (e.g., \cite{baldwin98})]
  Let $\calN$ be a set of \emph{unvaried} numerical predicates.  Let
  $\vec{x}, \vec{y}$ be two vectors of first-order variables of size $k$ and
  $\ell$, respectively.  A formula $\phi(\vec{x}, \vec{y})$ of $\FO[\calN]$,
  over a single-letter alphabet, has the \emph{independence property} if for
  all $n > 0$ there are vectors $\vec{a_0}, \vec{a_1}, \ldots, \vec{a_{n-1}}$,
  each of $\bbn^k$, for which for any $M \subseteq [n]$, there is a vector
  $\vec{b_M} \in \bbn^\ell$ such that:\footnote{Note that we evaluate a formula
    over an \emph{infinite} domain; this is well defined in our case since we
    only use \emph{unvaried} predicates and the letter predicates are
    irrelevant.}
  \[\langle \bbn, \vec{x} := \vec{a_i}, \vec{y} := \vec{b_M}\rangle \models
    \phi
    \quad\text{iff}\quad i \in M\enspace.\]
  The logic $\FO[\calN]$ has the \emph{independence property} if it contains
  such a $\phi$.
\end{definition}
Intuitively, a logic has the independence property iff it can encode arbitrary
sets.  Barrington~et~al.\ \cite{baimlascth05}, relying on a deep result of Baldwin and
Benedikt~\cite{baldwin98}, show that:
\begin{theorem}[{\cite[Corollary 4.13]{baimlascth05}}]
  If a logic does not have the independence property, then it has the strong
  Crane Beach Property.
\end{theorem}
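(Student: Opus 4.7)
The plan is to take any neutral letter language $L \subseteq A^*$ defined by some $\phi \in \FO[\calN]$ with neutral letter $e \in A$, and construct an equivalent $\psi \in \FO[\leq]$. The engine is a Baldwin--Benedikt style collapse: in an infinite ambient structure $(\bbn, \calN)$ lacking the independence property, every first-order formula built from $\calN$ together with additional unary predicates becomes, on sufficiently spread-out interpretations of those predicates, equivalent to a formula mentioning only the linear order induced on the ``active'' elements. The role of the neutral letter will be precisely to allow us to impose ``sufficiently spread-out''.

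First I would re-encode words. View $u \in A^*$ as the tuple $(|u|, (U_a)_{a \neq e})$, where $U_a = \{i \mid u_i = a\} \subseteq [|u|]$ and the complement of $\bigcup_a U_a$ is implicitly labeled by $e$. The formula $\phi$ can then be read as a statement about the $U_a$'s inside the ambient structure $(\bbn, \calN)$, rather than inside the finite word model. Since $e$ is neutral, $u \in L$ depends only on the ordered sequence of labels of the non-neutral positions, not on their numerical values; in particular, inserting $e$'s between consecutive non-neutral positions of $u$ yields a word with the same membership status. This is the finite-word analogue of ``generic position''.

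Second, apply the collapse to $\phi$ viewed as a formula over $(\bbn, \calN)$ with the $U_a$ as additional unary predicates. The hypothesis that $\FO[\calN]$ does not have IP yields, via Baldwin--Benedikt, a formula $\psi$ over the vocabulary $\{\leq\} \cup \sigma_A$ that agrees with $\phi$ on every sufficiently spread-out configuration of the $U_a$'s. The neutral-letter invariance from the previous step lets us replace any input $u$ by a stretched $u'$ with the same membership in $L$ but with arbitrarily large gaps; for such $u'$ we have $u' \models \phi$ iff $u' \models \psi$, and since both $\phi$ and $\psi$ are insensitive to neutral letters (the latter by construction), this transfers back to $u$. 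Hence $\psi$ defines $L$ over all of $A^*$, which is the strong Crane Beach Property.

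The main obstacle, and the reason the theorem is genuinely deep rather than a bookkeeping exercise, is the collapse step itself. The combinatorial heart of ``no IP'' is a Sauer--Shelah style bound on the number of subsets of a finite set cut out by a single formula, and converting this bound into a \emph{syntactic} order-only normal form requires the full apparatus of indiscernibles and NIP theories developed by Baldwin and Benedikt. A direct finite-model-theoretic proof bypassing the infinite ambient structure would be very desirable but, to the best of my knowledge, is not known; once the collapse is granted as a black box, the reduction to the strong Crane Beach Property via neutral-letter genericity is the clean argument sketched above.
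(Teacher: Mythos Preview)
The paper does not prove this theorem; it is quoted as Corollary~4.13 of~\cite{baimlascth05} and invoked as a black box, so there is no proof in the present paper to compare your proposal against.

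That said, your sketch does match the architecture of the argument in the cited source: re-encode the non-neutral positions as unary database predicates over the infinite background structure $(\bbn,\calN)$, invoke the Baldwin--Benedikt natural-generic collapse (no independence property implies collapse to order) to obtain an order-only formula $\psi$, and use the neutral letter to move any given word into generic position so that $\phi$ and $\psi$ agree there, hence everywhere by neutral-letter invariance. One point deserves tightening: genericity in the Baldwin--Benedikt collapse is not a metric ``sufficiently spread-out'' condition but a Ramsey/indiscernibility one --- the active elements must lie inside an order-indiscernible sequence for the background structure. The neutral letter is indeed strong enough for this (one may relocate the non-neutral positions to \emph{any} prescribed finite set of the right size, in particular into an indiscernible set furnished by Ramsey's theorem), but ``arbitrarily large gaps'' alone would not suffice for an arbitrary $\calN$. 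With that caveat, and with your own acknowledgement that the collapse itself is the deep part, the outline is correct.
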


We note that this powerful tool cannot show that the logic we consider exhibits
the Crane Beach Property:
\begin{proposition}
  $\FO[\leq, \fin]$ has the independence property.
\end{proposition}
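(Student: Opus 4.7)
The plan is to exhibit a single binary finite-degree predicate $P$ and take $\phi(x,y) := P(x,y)$, with $k = \ell = 1$, as the witness. Since we have complete freedom both in the choice of $P \in \fin$ and in how $a_i, b_M$ depend on $n$, the idea is to engineer an encoding in which membership ``$i \in M$'' can be read off directly from $P(a_i, b_M)$, while arranging $a_i$ and $b_M$ to be spread out enough across $\bbn$ that no fixed natural number occurs in infinitely many tuples of $P$.

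Concretely, I would fix two injections
\[
  \alpha \colon \{(n,i) : n \geq 1,\ i \in [n]\} \to \bbn
  \quad\text{and}\quad
  \beta \colon \{(n,M) : n \geq 1,\ M \subseteq [n]\} \to \bbn
\]
with disjoint images (for example, $\alpha$ mapping into the even numbers and $\beta$ into the odd numbers via some pairing function), and define
\[
  P(x, y) \iff \exists n, i, M\ \bigl(\, x = \alpha(n,i)\ \land\ y = \beta(n,M)\ \land\ i \in M \,\bigr).
\]
For each $n \geq 1$, taking $a_i := \alpha(n,i)$ for $i \in [n]$ and $b_M := \beta(n,M)$ for $M \subseteq [n]$, we get $P(a_i, b_M) \iff i \in M$ immediately from the definition. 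Hence $\phi := P$ witnesses the independence property as soon as $P$ lies in $\fin$.

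The remaining work is to verify that $P$ is of finite degree. For a fixed $v \in \bbn$, injectivity of $\alpha$ yields at most one pair $(n_0, i_0)$ with $v = \alpha(n_0, i_0)$; for this pair, the tuples $(v, y) \in P$ correspond bijectively to the subsets $M \subseteq [n_0]$ containing $i_0$, giving $2^{n_0-1}$ tuples. Symmetrically, if $v = \beta(n_0, M_0)$ the tuples $(x, v) \in P$ correspond to the $|M_0| \leq n_0$ elements of $M_0$; and if $v$ is in neither image it appears in no tuple. The main subtlety---and the reason naive encodings based on the $\BIT$ predicate fail here---is that both $\alpha$ and $\beta$ must carry $n$ inside their output: otherwise a single $a_i$ would be shared across all $b_M$'s for every $n \geq i$, instantly violating finite degree.
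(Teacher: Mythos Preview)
Your proof is correct and follows essentially the same strategy as the paper: take $\phi(x,y)=P(x,y)$ for a single binary finite-degree predicate, with $n$ encoded into both coordinates so that each value occurs in only finitely many tuples. The paper's concrete instantiation is $a_i = 2^n + 2^i$, $b_M = 2^n + \sum_{i\in M} 2^i$, with $P$ the bitwise-inclusion predicate (``$x\ \mathrm{AND}\ y = x$'') restricted to reject pairs not sharing the same MSB---exactly the ``non-naive'' bit encoding your closing remark anticipates.
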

\begin{proof}
  Let $n > 0$, and define $a_i = 2^n+2^i$ for $0 \leq i < n$.  Now for
  $M \subseteq [n]$, let $b_M = 2^n + \sum_{i \in M} 2^i$.  It holds that
  $i \in M$ iff the binary AND of $a_i$ and $b_M$ is $a_i$.  Consider this
  latter binary predicate; its behavior on two arguments that do not share the
  same MSB is irrelevant, and we can thus decide that such inputs are rejected.
  Thanks to this, we obtain a finite-degree predicate.  Consequently, the
  formula that consists of this single predicate has the independence property.
\end{proof}

\section{\(\FO[\leq, \fin]\) has the Crane Beach Property}\label{sec:cbcfin}

\subsection{Communication complexity}

We will show the Crane Beach Property of $\FO[\leq, \fin]$ by a communication
complexity argument.  This approach is mostly unrelated to the use of
communication complexity of~\cite{laut06,chattopadhyay07}; in particular, we are
concerned with two-party protocols with a split of the input in two contiguous
parts, as opposed to worst-case partitioning of the input among multiple
players.  We rely on a characterization of~\cite{fijalkow14} of the class of languages
expressible in \emph{monadic second-order} with varied monadic numerical
predicates.  Writing this class $\MSO[\leq, \mon]$, they state in particular the
following:
\begin{proposition}[{\cite[Theorem~2.2]{fijalkow14}}]\label{prop:msofin}
  Let $L \subseteq A^*$ and define, for all $p \in \bbn$, the equivalence
  relation $\sim_p$ over $A^*$ as: $u \sim_p v$ iff for all $w \in A^p$,
  $u\cdot w \in L \Leftrightarrow v\cdot w \in L$.  If there is a $N \in \bbn$
  such that for all $p \in \bbn$, $\sim_p$ has at most $N$ equivalence classes,
  then $L \in \MSO[\leq, \mon]$.
\end{proposition}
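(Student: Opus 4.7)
The plan is to encode $L$ by a non-uniform finite-state recognizer with a fixed state space $[N]$, delegating all length-dependence to varied monadic numerical predicates. Fix, for each $p \in \bbn$, a labeling $[\cdot]_p$ of the $\sim_p$-classes by elements of $[N]$ (padding with a dummy value when there are fewer than $N$ classes). A one-line unfolding of the definition yields $u \sim_p v \Rightarrow ua \sim_{p-1} va$ for every $a \in A$, so the map $\delta_p\colon [N] \times A \to [N]$ defined by $\delta_p([u]_p, a) = [ua]_{p-1}$ is well-defined. Let also $j_0(p) = [\eps]_p$ and $j^* = [L]_0$, so that $u \in L$ iff $[u]_0 = j^*$.

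The intended ``state'' at position $i$ of an input $u$ of length $n$ is $[u_0 u_1 \cdots u_i]_{n-i-1}$, which evolves from one position to the next via $\delta$. To express this in $\MSO[\leq, \mon]$, introduce for each triple $(j, a, j') \in [N] \times A \times [N]$ a monadic predicate $T_{j,a,j'}$ whose interpretation on $[n]$ is $\{i \in [n] : i \geq 1 \text{ and } \delta_{n-i}(j, a) = j'\}$, and for each pair $(a, j) \in A \times [N]$ a monadic predicate $I_{a, j}$ whose interpretation on $[n]$ is $\{0\}$ if $\delta_n(j_0(n), a) = j$, and $\emptyset$ otherwise. Both are varied, and they belong to $\mon$ by definition.

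Now write the $\MSO[\leq, \mon]$ sentence that existentially quantifies sets $X_1, \ldots, X_N$ forming a partition of the universe---intended to satisfy $X_j(i) \Leftrightarrow [u_0 \cdots u_i]_{n-i-1} = j$---and asserts: (i) position $0$ satisfies $\bigvee_{a, j}(\lt{a}(0) \land I_{a,j}(0) \land X_j(0))$; (ii) for every position $i \geq 1$ and every triple $(j, a, j')$, the conjunction $X_j(i-1) \land \lt{a}(i) \land T_{j,a,j'}(i)$ forces $X_{j'}(i)$; (iii) the maximum position lies in $X_{j^*}$. A straightforward induction on $i$ establishes the intended meaning of the $X_j$'s, whence the formula defines precisely $L$.

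The only conceptual subtlety is recognizing that the uniform bound $N$ permits a single fixed state space, with all length-dependence absorbed into the varied monadic predicates that tabulate $\delta_p$ and $j_0(p)$; after that, the construction is essentially mechanical. A minor technicality is the treatment of the edge cases $n \leq 1$, which is handled straightforwardly by the conventions on $I_{a,j}$.
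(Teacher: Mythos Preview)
The paper does not supply its own proof of this proposition: it is quoted verbatim as \cite[Theorem~2.2]{fijalkow14} and invoked as a black box. There is therefore nothing in the paper to compare your argument against.

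That said, your proof is correct and is essentially the natural one. The key observation---that $u \sim_p v$ implies $ua \sim_{p-1} va$, so that the length-indexed transition functions $\delta_p$ are well-defined on a \emph{fixed} state set $[N]$---is exactly what the uniform bound on the number of $\sim_p$-classes buys you. Once that is in place, tabulating the $\delta_p$'s and the initial classes $j_0(p)$ as varied monadic predicates and then writing the obvious MSO run-existence sentence is routine. Two small cosmetic points: your notation $j^* = [L]_0$ is a slight abuse (you mean the label assigned to the $\sim_0$-class consisting of the words in $L$, assuming $L \neq \emptyset$; the degenerate cases $L = \emptyset$ and $L = A^*$ are trivial anyway), and the empty-word case $n=0$ genuinely requires a separate clause since the sentence as written presupposes a nonempty universe---but you flag this, and it is indeed harmless.
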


\defmathtext\alice{Alice} \defmathtext\bob{Bob}\defmathtext\neutral{Neutral}
\begin{lemma}\label{lem:comm}
  Let $L \subseteq A^*$.  Suppose there are functions
  $f_\alice \colon A^* \times \bbn \times \binal^* \to \binal$ and
  $f_\bob \colon A^* \times \bbn \times \binal^*$ and a constant $K \in \bbn$
  such that for any $u, v \in A^*$, the sequence, for $1 \leq i \leq K$:

  \vspace{\topsep}
  \noindent\hspace{\IEEElabelindent}%
  \(\begin{array}{@{}l@{\,}lll}
      \text{\labelitemi}\;\; a_i = f_\alice &(u, &|u\cdot v|, &b_1b_2\cdots b_{i-1})\\
      \text{\labelitemi}\;\; b_i = f_\bob &(v, &|u \cdot v|, & a_1a_2\cdots a_i);
    \end{array}\)

  \vspace{\topsep}%
  \noindent is such that $b_K = 1$ iff $u\cdot v \in L$.  Then $L\in \MSO[\leq, \mon]$.
\end{lemma}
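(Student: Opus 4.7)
The plan is to apply Proposition~\ref{prop:msofin}: it suffices to produce, uniformly in $p$, a bound on the number of $\sim_p$-equivalence classes of $A^*$ that depends only on $K$.

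To each $u\in A^*$ I would attach the signature $\alpha_u\colon \binal^{<K}\to\binal$ sending a Bob-history $\beta\in\binal^{<K}$ to $f_\alice(u,|u|+p,\beta)$; this records Alice's entire decision tree when she holds $u$ and the total length is $|u|+p$. Since $\binal^{<K}$ has $2^K-1$ elements, there are at most $2^{2^K-1}$ distinct signatures, independently of $p$.

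The key technical step is to show that $\alpha_u=\alpha_{u'}$ implies $u\sim_p u'$. For any $w\in A^p$ I would compare the two protocol executions on $(u,w)$ and $(u',w)$: Alice's bits in both runs are dictated by the common signature, so any divergence between the transcripts can only come from Bob. The main obstacle is that $f_\bob$ takes the total length as an argument and the lengths $|u|+p$ and $|u'|+p$ need not agree, so Bob's contributions may a priori differ. The resolution should exploit the protocol's correctness on \emph{every} split of \emph{every} word (in particular the degenerate splits where Alice holds the empty prefix): this forces $f_\bob$ to behave compatibly across total lengths once Alice's strategy is fixed, and hence the output on $(u,w)$ at length $|u|+p$ must match the output on $(u',w)$ at length $|u'|+p$.

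Granting this claim, $|A^*/{\sim_p}|\le 2^{2^K-1}$ for every $p$, and Proposition~\ref{prop:msofin} delivers $L\in\MSO[\leq,\mon]$.
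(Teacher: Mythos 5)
Your strategy is the one the paper uses: record Alice's decision tree $\alpha_u=f_\alice(u,|u|+p,\cdot)$ on histories of length $<K$, note that there are at most $2^{2^K-1}$ such trees, argue that equality of trees refines $\sim_p$, and invoke Proposition~\ref{prop:msofin} (the paper's invariant $c(u)$ is exactly your $\alpha_u$ in a different encoding). You have also correctly isolated the one delicate point, namely that $f_\bob$ receives the total length and $|u|+p\neq|u'|+p$ in general. But your proposed resolution does not work, and in fact the claim it is meant to establish --- that $\alpha_u=\alpha_{u'}$ implies $u\sim_p u'$, hence a bound on $|A^*/{\sim_p}|$ independent of $p$ --- is false. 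Take $A=\{a,b\}$ and $L=\{w: w_{\lfloor|w|/2\rfloor}=a\}$. There is a two-round protocol: with $n=|u\cdot v|$ and $m=\lfloor n/2\rfloor$, Alice sends whether $m<|u|$ and, if so, whether $u_m=a$; Bob, who knows $|u|=n-|v|$, either echoes Alice's answer or reads $v_{m-|u|}$. Every $u$ with $|u|\leq p$ has $m\geq|u|$ at total length $|u|+p$, hence the same trivial Alice tree (she always sends $0$); yet such words fall into at least $\lfloor p/2\rfloor+1$ distinct $\sim_p$-classes, because $u\sim_p u'$ forces $\lfloor(|u|+p)/2\rfloor-|u|=\lfloor(|u'|+p)/2\rfloor-|u'|$, and this offset sweeps through all of $\{0,\ldots,\lfloor p/2\rfloor\}$ as $|u|$ ranges over $\{0,\ldots,p\}$. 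So no bound on the number of $\sim_p$-classes over all of $A^*$ can be extracted from the protocol, and the degenerate splits cannot rescue the argument: correctness with Alice holding $\eps$ constrains only $f_\bob(w,|w|,\cdot)$ and says nothing about $f_\bob(w,n,\cdot)$ for $n>|w|$. (The lemma itself is not threatened: this $L$ is in $\FO[\mon]$ via the varied monadic predicate marking position $\lfloor n/2\rfloor$.)

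The repair is small: intersect your equivalence with equality of lengths, i.e., set $u\equiv u'$ iff $|u|=|u'|$ and $\alpha_u=\alpha_{u'}$. Then for any $w\in A^p$ the two runs have the same total length, $f_\bob$ receives identical arguments, and the straightforward induction yields identical transcripts, whence $u\sim_p u'$. This bounds the number of $\sim_p$-classes \emph{among words of each fixed length} by $2^{2^K-1}$, uniformly in $p$ and in that length, and this per-prefix-length bound is the form in which the criterion of~\cite{fijalkow14} must be applied (it is what a bounded-size automaton reading monadic advice requires); Proposition~\ref{prop:msofin} should be read accordingly, and the paper's ``simple induction'' likewise only goes through once the two prefixes are assumed to have the same length. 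With that amendment your argument is complete and coincides with the intended one.
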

\begin{proof}
  We adapt the (folklore) proof that $L$ is regular iff such functions exist
  where $f_\alice$ and $f_\bob$ do not use their second parameter.

  Let $p \in \bbn$.  For any $u \in A^*$, let $c(u)$ be the set of pairs
  $(a_1a_2\cdots a_K, b_1b_2\cdots b_{K-1})$ such that for all
  $1 \leq i \leq K$, it holds that
  $a_i = f_\alice(u, |u| + p, b_1b_2\cdots b_{i-1})$.  Define the equivalence
  relation $\equiv$ by letting $u \equiv v$ iff $c(u) = c(v)$; it clearly has a
  finite number $N = N(K)$ of equivalence classes.  Moreover, if $u \equiv v$
  and $w \in A^p$, then $(u, w)$ and $(v, w)$ define the same sequences of
  $a_i$'s and $b_i$'s, by a simple induction.  Hence
  $u \cdot w \in L \Leftrightarrow v \cdot w \in L$.  This shows that $\equiv$
  refines $\sim_p$, implying, by Proposition~\ref{prop:msofin}, that
  $L \in \MSO[\leq, \mon]$.
\end{proof}

We shall adopt the classical communication complexity view here, and consider
$f_\alice$ and $f_\bob$ as two players, Alice and Bob, that alternate exchanging
a bounded number of bits in order to decide if the concatenation of their
respective inputs is in $L$.  To show that $L$ is in $\MSO[\leq, \mon]$, the
protocol between Alice and Bob should end in a constant number of rounds.  We
will then rely on the fact that:

\begin{figure*}\label{fig:phi}
  \centering
  \begin{tikzpicture}[>=stealth]
  \node (a) {
    \begin{tikzpicture}
      \node (ex) {$\exists x$};
      \node (all) [below= 0.5 of ex] {$\forall y$};
      \node (psi) [below= 0.5 of all] {$\psi$};
      \path[draw] (ex) -- (all);
      \path[draw] (all) -- (psi);
    \end{tikzpicture}
    };
  \node (b) at ($(a) + (7,0.3)$) {
    \begin{tikzpicture}
      \tikzset{bob_style/.style={circle, draw, very thin, color=white!50!black, text=black, inner sep=2pt}}
      \node (or1) {$\lor$};
      \node (ex1) at ($(or1)+(-2,-0.5)$) {$\existsa x$};
      \node (and1) at ($(ex1)+(0,-0.75)$) {$\land$};
      \node (All11) at ($(and1)+(-1,-0.75)$) {$\foralla y$};
      \node (Psi11) at ($(All11)+(0,-0.75)$) {$\psi$};  
      
      \node (All12) at ($(and1)+( 1,-0.75)$) {$\forallb y$};
      \node (Psi12) at ($(All12)+(0,-0.75)$) {$\psi$};  
      
      \path[draw] (or1) -- (ex1);
      \path[draw] (ex1) -- (and1);
      \path[draw] (and1) -- (All11);
      \path[draw] (and1) -- (All12);
      \path[draw] (All11) -- (Psi11);
      \path[draw] (All12) -- (Psi12);

      \node (ex2) at ($(or1)+(2,-0.5)$) {$\existsb x$};
      \path[draw] (or1) -- (ex2);

      \node (and2) at ($(ex2)+(0,-0.75)$)  {$\land$};

      \node (All211) at ($(and2)+(-1,-0.75)$) {$\foralla y$};
      \node (qf211) at ($(All211)+(0,-0.75)$)[align=center] {$\psi$};

      \node (All212) at ($(and2)+(1,-0.75)$) {$\forallb y$};
      \node (qf212) at ($(All212)+(0,-0.75)$)[align=center] {$\psi$};

      \path[draw] (ex2) -- (and2);      
      \path[draw] (and2) -- (All211);
      \path[draw] (and2) -- (All212);
      \path[draw] (All211) -- (qf211);
      \path[draw] (All212) -- (qf212);
      
    \end{tikzpicture}    
  };
  \draw [very thin, color=white!50!black] ($(a)+(-2,-2)$) -- ++(14,0);
  \draw [very thin, color=white!50!black] ($(a)+(2,-2)$) -- ++(0,4);
  
  \node (c) at ($(a)+(5,-6)$) {
    \begin{tikzpicture}
      \tikzset{bob_style/.style={circle, draw, very thin, color=white!50!black, text=black, inner sep=2pt}}
      \node (or1) {$\lor$};
      \node (ex1) at ($(or1)+(6,-1)$) {$\existsb x$};
      \node (and1) at ($(ex1)+(0,-0.75)$) {$\land$};
      \node (All11) at ($(and1)+(1.5,-1.5)$) {$\forallb y$};
      \node (Psi1) at ($(All11)+(0,-2.5)$) {$\psi$};  
      
      \node (All12) at ($(and1)+(- 1.5,-1.5)$) [bob_style] {$\bigwedge$};
      \node (qf11) at ($(All12)+(-0.75,-2.5)$)[align=center] {$\lt{a}(x)\land{}$ \\ $\bot \to \bot$};
      \node (qf12) at ($(All12)+(0.75,-1.5)$) [align=center] {$\lt{a}(x) \land{} $\\ $\bot \to \bot$};
      
      \path[draw] (or1) -- (ex1);
      \path[draw] (ex1) -- (and1);
      \path[draw] (and1) -- (All11);
      \path[draw] (and1) -- (All12);
      \path[draw] (All11) -- (Psi1);
      \path[draw] (All12) -- (qf11) node [scale=0.75, pos=0.5, fill=white]{$y=0$};
      \path[draw] (All12) -- (qf12) node [scale=0.75, pos=0.5, fill=white]{$y=1$};

      \node (ex2) at ($(or1)+(-2,-0.75)$) [bob_style]{$\bigvee$};
      \path[draw] (or1) -- (ex2);

      \node (and21) at ($(ex2)+(-3,-0.75)$)  {$\land$};

      \node (All211) at ($(and21)+(1,-1.5)$) {$\forallb y$};
      \node (qf211) at ($(All211)+(0,-2.5)$)[align=center] {$\top \land{}$ \\ $\top \to \lt{b}(y)$};

      \node (All212) at ($(and21)+(-1.5,-1.5)$)[bob_style] {$\bigwedge$};
      \node (qf2121) at ($(All212)+(-0.75,-2.5)$)[align=center] {$\top  \land{}$ \\ $\bot \to \bot$};
      \node (qf2122) at ($(All212)+( 0.75,-1.5)$)[align=center] {$\top \land{}$ \\ $\top \to \bot$};

      \node (and22) at ($(ex2)+( 2.5,-0.75)$) {$\land$};

      \node (All221) at ($(and22)+(1 ,-1.5)$) {$\forallb y$};
      \node (qf221) at ($(All221)+(0,-1.5)$)[align=center] {$\top \land{}$ \\ $\top \to \lt{b}(y)$};

      \node (All222) at ($(and22)+(-1.5,-1.5)$) [bob_style] {$\bigwedge$};
      \node (qf2221) at ($(All222)+(-0.75,-1.5)$)[align=center] {$\top \land{}$ \\ $\bot \to \bot$};
      \node (qf2222) at ($(All222)+(0.75,-2.5)$)[align=center] {$\top \land{}$ \\ $\bot \to \bot$};
      \path[draw] (ex2) -- (and21) node [scale=0.75, pos=0.5, yshift=0pt, fill=white]{$x=0$};
      \path[draw] (ex2) -- (and22) node [scale=0.75, pos=0.5, yshift=0pt, fill=white] {$x=1$};
      \path[draw] (and21) -- (All211);
      \path[draw] (and21) -- (All212);
      \path[draw] (and22) -- (All221);
      \path[draw] (and22) -- (All222);
      \path[draw] (All211) -- (qf211);
      \path[draw] (All212) -- (qf2121) node [scale=0.75, pos=0.5, fill=white]{$y=0$};
      \path[draw] (All212) -- (qf2122) node [scale=0.75, pos=0.5, fill=white]{$y=1$};
      \path[draw] (All221) -- (qf221);
      \path[draw] (All222) -- (qf2221) node [scale=0.75, pos=0.5, fill=white]{$y=0$};
      \path[draw] (All222) -- (qf2222) node [scale=0.75, pos=0.5, fill=white]{$y=1$};

      \node (Bob) at (4,0.5) [align=center, color=white!40!black] {Quantified by Alice};
      \path[draw,,->,very thin ,color=white!60!black] (Bob.south) .. controls ++(0,-1) and ++(1,0) .. (ex2.east);
      \path[draw,->,very thin ,color=white!60!black] (Bob.south) .. controls ++(0,-3) and ++(1,1) .. (All212.east);
      \path[draw,->,very thin ,color=white!60!black] (Bob.south) .. controls ++(0,-2) and ++(0,3) .. (All222.north);  
      \path[draw,->,very thin ,color=white!60!black] (Bob.south) .. controls ++(0,-0.75) and ++(0,1.5) .. ($(All12.north)+(0,0.75)$)--(All12);

    \end{tikzpicture}
  };
  \node (la) at ($(a)+(-1,1.5)$) {a)};
  \node (lb) at ($(la)+(4,0)$) {b)};
  \node (lc) at ($(la)+(0,-5)$) {c)};

 \end{tikzpicture}
  \caption{The formula $\phi$ as it gets evaluated by Alice and Bob.}
  \label{fig}
\end{figure*}
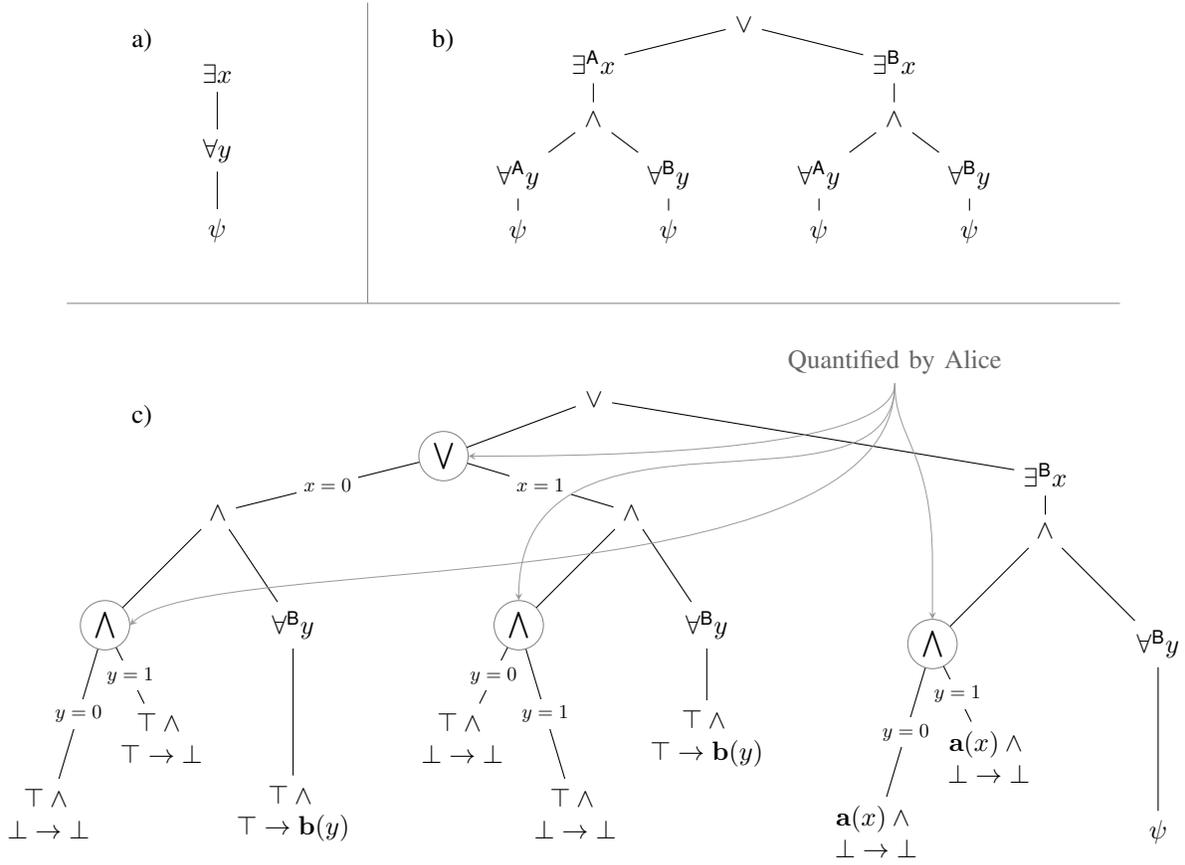

\begin{theorem}[{\cite[Theorem 4.6]{fijalkow14}}]\label{thm:cbcmso}
  $\MSO[\leq, \mon]$ has the Crane Beach Property.
\end{theorem}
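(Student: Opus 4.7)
The plan is to show that any neutral letter language $L \subseteq A^*$ defined by a formula $\phi \in \MSO[\leq, \mon]$ is regular. Fix the neutral letter $e \in A$, let $\phi$ have quantifier rank $k$, and let $U_1, \ldots, U_r$ be the (possibly varied) monadic predicates appearing in $\phi$; together these induce, on each universe $[n]$, a colouring of positions by one of $c = 2^r$ colours.

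The first step is to invoke the Fraïssé--Hintikka theorem for $\MSO$: at quantifier rank $k$, a word enriched with the interpretation of the $U_i$'s has one of finitely many \emph{types}, and these types compose, in the sense that the type of $w_1 \cdot w_2$ is determined by the types of $w_1$ and $w_2$ together with a bounded amount of information about how the $U_i$'s sit across the cut. This reduces the problem to understanding how the type can vary over long stretches of the neutral letter.

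Next I would use a Ramsey-style argument to absorb the monadic predicates. For a threshold $N_0$ depending only on $k$ and $c$, any block of at least $N_0$ consecutive $e$-positions contains a long sub-block that is monochromatic with respect to the $U_i$-colouring. Inside such a monochromatic $e$-block all positions are indistinguishable by both the letter and the monadic predicates, so the $k$-round MSO Ehrenfeucht--Fraïssé game cannot distinguish it from a shorter (or longer) copy of itself; combined with the fact that $e$ is neutral, the block can be contracted to a length bounded in $k$ without changing membership in $L$. Iterating, every $w \in A^*$ is equivalent in $L$ to a normalized $w'$ whose $e$-blocks all have length at most $N_0$.

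On the set of normalized words the total length is linearly bounded in the length of the non-neutral subsequence, so the finitely many rank-$k$ types together with the compositionality from the first step yield, via a Myhill--Nerode-style argument, a finite automaton for $L$. The principal obstacle lies in the Ramsey/pumping step: the monadic predicates are attached to absolute positions in $\bbn$, so contracting an $e$-block shifts all subsequent positions and thereby alters the interpretation of the $U_i$'s on the right half of the word. The argument must therefore be performed uniformly inside each universe $[n]$, and it is precisely the \emph{monadic} character of the predicates---affording a constant number of colours per position---that makes the Ramsey step go through; the whole strategy collapses as soon as one allows predicates of higher arity.
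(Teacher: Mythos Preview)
First, note that the paper does \emph{not} prove this statement: Theorem~\ref{thm:cbcmso} is quoted verbatim from~\cite{fijalkow14} and used as a black box, so there is no ``paper's own proof'' to compare against.

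On your proposal itself, there is a genuine gap, and it sits exactly where you label ``the principal obstacle'' without then closing it. Contracting an $e$-block preserves membership in $L$ \emph{for free} by the neutral-letter hypothesis --- no Ramsey argument, no monochromaticity, no EF game is needed for that step. What the Ramsey/EF machinery could conceivably buy you is preservation of the rank-$k$ \emph{type}, but it does not: rank-$k$ types and their compositionality live inside a single fixed structure, whereas here deleting $e$'s changes the universe size and hence (since the $U_i$ are varied) potentially the interpretation of every $U_i$ on every position, not only on the shifted suffix. After your normalisation you still have, for each length $n$, a completely different colouring of $[n]$, and the ``finitely many rank-$k$ types together with compositionality yield a finite automaton'' step is exactly the step that does not go through; nothing in the outline bounds the number of Myhill--Nerode classes of $L$. (Also, minor point: a long $e$-block need not contain a long \emph{contiguous} monochromatic sub-block with $c$ colours --- pigeonhole only gives a colour that appears often.)

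The route taken in~\cite{fijalkow14} avoids EF games altogether and passes through the automaton-with-advice characterisation of $\MSO[\leq,\mon]$ (the converse of Proposition~\ref{prop:msofin}): there is a single DFA with state set $Q$ over $A\times B$ and, for each $n$, an advice string $\beta_n\in B^{n}$ such that $w\in L\cap A^{n}$ iff the DFA accepts $w\otimes\beta_n$. The neutral letter gives $u\sim_p ue$ for every $p$, so among any $|Q|+1$ words one may pad to a common length $\ell$; in universe $\ell+p$ two of them reach the same state and are therefore $\sim_p$-equivalent, whence every $\sim_p$ has at most $|Q|$ classes. The neutral letter also forces $\sim_{p+1}$ to refine $\sim_p$, so the chain stabilises and the Myhill--Nerode relation of $L$ has at most $|Q|$ classes. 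The moral is that the state of one fixed automaton is an invariant that survives the change of universe; a rank-$k$ MSO type of a coloured word is not.
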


\subsection{A toy example: \(\FO[<] \subseteq \MSO[\leq, \mon]\)}

We will demonstrate how the communication complexity approach will be used with
a toy example.  Doing so, the requirements for this protocol to work will be
emphasized, and they will be enforced when showing the Crane Beach Property of
$\FO[\leq, \fin]$ in Section~\ref{sec:cbcfinhere}.

Let us consider the following formula over $A = \{a, b, c\}$:
\[\phi \equiv (\exists x)(\forall y)[\psi],\quad\text{with } \psi \equiv
  \lt{a}(x) \land (x < y \to \lt b(y))\enspace,\]%
depicted as a tree in Figure~\ref{fig}.a.  The formula $\phi$ asserts that
the all the letters after the last $a$ are $b$'s.  In this example, Alice will
receive $u = aa$, and Bob $v = bb$.  Naturally, $\phi$ over words of length $4$
is equivalent to the formula where $\exists x$ is replaced by $\bigvee_{x=0}^3$,
and $\forall y$ is replaced by $\bigwedge_{y=0}^3$; our approach will be to
split this rewriting between Alice and Bob.

Consider the variable $x$.  To check the validity of the formula over a
$u \cdot v$, the variable should range over the positions of both players.  In
other words, the formula is true if there is a position $x$ of Alice verifying
$(\forall y)[\psi]$ \emph{or} a position $x$ of Bob verifying it---likewise for
the universal quantifier.  We thus ``split'' the quantifiers by enforcing the
domain to be either Alice's ($\foralla, \existsa$) or Bob's
($\forallb, \existsb$), obtaining Figure~\ref{fig}.b.

Alice will now expand her quantifiers to range over her word; she will thus
replace, e.g., $(\foralla y)[\psi]$ by $\bigwedge_{y=0}^1 \psi$.  Crucially, at
the leaves of the formula, it is known which variables were quantified by each
player, and if they are Alice's, their values.  Consider for instance a leaf
where Alice substituted $y$ with a numerical value.  The letter predicate
$\lt{b}(y)$ can thus be replaced by its truth value.  More importantly, the
predicate $x < y$ can \emph{also} be evaluated: Either Alice quantified $x$, and
it has a numerical value, or she did not, and we know for sure that $x < y$ does
not hold, since $x$ will be quantified by Bob.  Applied to our example, we
obtain the tree of Figure~\ref{fig}.c.

The resulting formulas at the leaves are thus \emph{free from the variables
  quantified by Alice.}  Moreover, for each internal node of the tree, its
children represent subformulas of bounded quantifier depth, and there are thus a
finite number of possible nonequivalent subformulas.  Once only one subformula
per equivalence class is kept, the resulting tree is of bounded depth and each
node has a bounded number of children.  Hence the size of this tree is bounded
by a value that \emph{only depends on $\phi$}.  Alice can thus communicate this
tree to Bob.  In our example, simplifying the tree, we obtain the formula:
\[(\forallb y)[\lt b(y)] \lor (\existsb x)\Big[\lt a(x) \land (\forallb
  y)[\psi]\Big]\enspace.\]

Finally, Bob can actually quantify his variables, resulting in a formula
with no quantified variable, that he can evaluate, concluding the
protocol.

\emph{Takeaway.} This protocol relies on the fact that predicates that involve
variables from both Alice and Bob can be evaluated by Alice alone.  This enables
Alice to remove ``her'' variables before sending the partially evaluated tree to
Bob, who can quantify the remainder of the variables.

\subsection{The case of \(\FO[\leq, \fin]\)}\label{sec:cbcfinhere}

\begin{theorem}\label{thm:cbcfin}
  $\FO[\leq, \fin]$ has the strong Crane Beach Property.
\end{theorem}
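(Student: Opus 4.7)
The plan is to combine Lemma~\ref{lem:comm} with Theorem~\ref{thm:cbcmso} and the classical fact of~\cite{baimlascth05} that every regular neutral letter language is star-free. Concretely, given $\phi \in \FO[\leq, \fin]$ defining a neutral letter language $L$, I would \textbf{(i)}~exhibit a constant-round communication protocol for $L$ in the sense of Lemma~\ref{lem:comm}, thereby placing $L$ in $\MSO[\leq, \mon]$; \textbf{(ii)}~invoke Theorem~\ref{thm:cbcmso} to conclude that $L$ is regular; and \textbf{(iii)}~apply the star-free characterization of regular neutral letter languages to conclude $L \in \FO[\leq]$, which is exactly the strong Crane Beach Property.

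All the real work lies in~(i), and follows the schema of the toy example. Alice, holding $u$ and informed of $N = |u \cdot v|$, processes the syntax tree of $\phi$ top-down: she splits each quantifier $Qx$ as a disjunction (or conjunction, for $\forall$) of its $\textsf{A}$- and $\textsf{B}$-restrictions, fully expands her $\textsf{A}$-quantifications over $[|u|]$, and simplifies every atom whose truth value she can determine---letter predicates on her variables, the order across sides (Alice's positions are always below Bob's), and pure-Alice finite-degree atoms. Keeping one representative per logical equivalence class at each level collapses the expanded tree to one of bounded size, which she sends to Bob; Bob then quantifies his variables and evaluates.

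The main obstacle, absent from the toy example, is a mixed-variable finite-degree atom $P(\vec{x}, \vec{y})$ with $\vec{x}$ Alice's and $\vec{y}$ Bob's: once $\vec{x}$ is fixed, the admissible set of completions $\vec{y}$ is guaranteed only to be \emph{finite}, so Alice cannot afford to ship it verbatim to Bob. I expect the resolution to hinge on the neutral letter of $L$: inserting or deleting an $e$ does not change truth, and a Bob position carrying $e$ is indistinguishable from any other $e$-position for the letter predicates. Combining this invariance with the rigidity of finite-degree predicates---each fixed Alice tuple pins down only finitely many Bob tuples per atom---one should normalize Bob's side, pushing his non-neutral letters into a canonical window, and argue that the set of Bob positions whose precise identity actually matters for $\phi$ collapses, after this normalization, to a constant number of cases depending only on $\phi$. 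It is these cases, rather than the raw finite set $\{\vec{y} : P(\vec{x}, \vec{y})\}$, that Alice encodes.

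Once the leaves can be summarized in $O(1)$ bits and the tree has bounded depth (the quantifier depth of $\phi$) and bounded fan-out, Lemma~\ref{lem:comm} yields $L \in \MSO[\leq, \mon]$; Theorem~\ref{thm:cbcmso} and the star-free characterization of regular neutral-letter languages then close stages~(ii) and~(iii).
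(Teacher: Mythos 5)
Your outer skeleton is exactly the paper's: use Lemma~\ref{lem:comm} to put $L$ in $\MSO[\leq, \mon]$, then Theorem~\ref{thm:cbcmso}, then the star-free characterization of regular neutral-letter languages in $\FO[\arb]$. You also correctly isolate the crux: a finite-degree atom mixing Alice's and Bob's variables cannot be shipped as a finite set of admissible completions. But your resolution of that crux is a hope rather than an argument, and it is not the mechanism that actually works. ``Pushing Bob's non-neutral letters into a canonical window'' does not address the problem, because the difficulty is positional, not letter-based: for a given Alice tuple, the finitely many Bob completions admitted by a finite-degree atom can sit anywhere in Bob's zone (e.g.\ the successor relation $\{(x,x+1)\}$ is finite-degree and links Alice's last position to Bob's first), independently of which letters Bob holds; and the claim that ``the set of Bob positions whose precise identity matters collapses to constantly many cases'' is precisely what needs proof --- if it held as stated, no normalization would be needed at all.

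The missing idea is to use the neutral letter \emph{before} the protocol starts: both players agree (from $\phi$ and $|u\cdot v|$ alone) on a padding length $N$ and decide membership of $u\cdot e^N\cdot v$ instead of $u\cdot v$. One then defines a link graph on positions from the finitely many finite-degree predicates occurring in $\phi$; finite degree gives each position finitely many neighbors, yielding nondecreasing unbounded cut functions $\lcut$ and $\rcut$. Positions are typed into \emph{three} zones --- Alicic, Neutral (inside the padding), Bobic --- and, crucially, the zone boundaries must \emph{move} with each successive quantification (with budgets $2^{k-i}$, as in Fact~\ref{fact:move}), because a fixed boundary would leave an Alicic position adjacent, hence possibly linked, to a Neutral one. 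With this dynamic typing one proves (Fact~\ref{fact:crux}) that positions of different types are never linked, so every mixed finite-degree atom is identically \emph{false}, every mixed order atom is determined by the types, and every Neutral position carries $e$; only then can Alice eliminate all her variables and send a bounded tree, and Bob quantify the Neutral and Bobic variables himself. Your two-zone, no-padding protocol cannot reach this point, and your sketch contains neither the padding, the third (Neutral) type, nor the moving boundaries, which are the technical heart of the proof.
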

\begin{proof}
  Let $\phi$ be a formula over an alphabet $A$ in $\FO[\leq, \calN]$, for some
  finite subset $\calN$ of $\fin$, and suppose $\phi$ expresses a language $L$
  that admits a neutral letter $e$.  We show that $L \in \MSO[\leq, \mon]$ using
  Lemma~\ref{lem:comm}.  This concludes the proof since by
  Theorem~\ref{thm:cbcmso}, $L$ is a neutral letter regular language in
  $\FO[\arb]$, and it thus belongs to $\FO[\leq]$ (see \cite{baimlascth05}; this
  is essentially a consequence of $\parity \notin \ACz$).

  Let us write $u \in A^*$ for Alice's word, and $v$ for Bob's.  Both players
  will compute a value $N > 0$ that depends solely on $\phi$ and $|u \cdot v|$,
  and the protocol will then decide whether $u\cdot e^N \cdot v \in L$, which is
  equivalent to $u \cdot v \in L$ by hypothesis.  We suppose that a large enough
  $N$ has been picked for the protocol to work, and delay to the end of the
  proof its computation.

  We will henceforth suppose that $\phi$ is given in prenex normal form and that
  all variables are quantified only once:
  \[\phi \equiv (Q_1 x_1)(Q_2 x_2)\cdots(Q_k x_k)[\psi]\enspace,\]
  with $\psi$ quantifier-free and $Q_i \in \{\forall, \exists\}$.  We again see
  formulas as trees with leaves containing quantifier-free formulas.

  Rather than splitting the domain $[|u\cdot e^N \cdot v|]$ at a precise
  position, and tasking Alice to quantify over the first half and Bob over the
  second half, we will rely on a third group, that is ``far enough'' from both
  Alice's and Bob's words.  The core of this proof is to formalize this notion.
  Let us first introduce the tools that will enable this formalization: one set
  of definitions, and two facts that will be used later on.

  \begin{definition}
    Let $C$ be the set of pairs of integers $(p_1, p_2)$ that appear in a same
    tuple of a relation in $\calN$.  Define the \emph{link graph}
    $G = (\bbn, E)$ as the \emph{undirected} graph defined by $(p_1, p_2) \in E$
    iff $p_1 = p_2$ or there are integers $p_1' \leq \{p_1, p_2\} \leq p_2'$
    such that $(p_1', p_2') \in C$.  For $p \in \bbn$, $\lcut(p)$ (\resp
    $\rcut(p)$) is the greatest $q < p$ (\resp smallest $q > p$) which is
    not a neighbor of $p$ in $G$.  Equivalently, $\lcut(p)$ is the smallest
    neighbor of $p$ minus~1, and $\rcut(p)$ is the greatest neighbor of $p$
    plus~1.
  \end{definition}

  Note that $\lcut$ and $\rcut$ are well defined since each vertex of $G$
  has a finite number of neighbors.  This directly implies that:
  \begin{fact}\label{fact:minl}
    The functions $\lcut$ and $\rcut$ are nondecreasing and unbounded.
    Moreover, for any $p \in \bbn$, $\lcut(p) < p < \rcut(p)$.
  \end{fact}



  Writing $\rcut^n$ for the function $\rcut$ composed $n$ times with
  itself, and similarly for $\lcut$, we have:
  \begin{fact}\label{fact:move}
    For any position $p$ and $n > m \geq 0$:
    \begin{itemize}
    \item $\lcut^m(\rcut^n(p)) \geq \rcut(p)$;
    \item $\rcut^m(\lcut^n(p)) \leq \lcut(p)$.
    \end{itemize}
  \end{fact}
  \begin{proof}
    This is easily shown by induction; we prove the first item, the second being
    similar.  For $n = 1$, this is clear.  Let $n > 1$.  If $m = 0$, this is
    immediate from Fact~\ref{fact:minl}, let thus $m > 0$.  We have that:
    \[\lcut^m(\rcut^n(p)) =
      \lcut(\lcut^{m-1}(\rcut^{n-1}(p')))\enspace,\]
    with $p' = \rcut(p)$.  By induction hypothesis and the fact that
    $\lcut$ is nondecreasing, it holds that:
    \[\lcut^m(\rcut^n(p)) \geq \lcut(\rcut(p')) = q\enspace.\]
    Let $p'' = \rcut(p')$.  By definition of $\lcut$, $(q+1, p'')$ is an edge in
    $G$.  Now by definition of $G$, if $q < p'$, then $(p', p'')$ should also be
    an edge in $G$, which contradicts the definition of $p''$.  Hence
    $q \geq p'$, showing the property.%
    \addtoqedsymbol{Fact~\ref{fact:move}}
  \end{proof}

  Let us now suppose we have two large positions
  $|u| \ll \ell_0 \ll r_0 \ll |v|$, the requirements on which will be made clear
  shortly.  Let us deem a position $p$ to be \emph{Alicic} if $p \leq \ell_0$,
  \emph{Bobic} if $p \geq r_0$, and \emph{Neutral} otherwise; we call this the
  \emph{type} of the position.  We wish to ensure that two positions of two
  different types cannot be linked in $G$, so that they cannot appear in a tuple
  of a predicate in $\calN$.  This surely is not the case if the typing of
  positions does not reflect previously typed positions, e.g., $\ell_0 - 1$ is
  Alicic, but $\ell_0$ is Neutral, and their distance may not be large enough to
  ensure that they do not form an edge in $G$.  Thus the boundaries of the
  zones, $\ell_0$ and $r_0$, will be moving with each new typing.  Formally, let
  $T = \{\alice, \neutral, \bob\}$ be an alphabet, and define the function
  $\bounds\colon T^{\leq k} \to [|u\cdot e^N\cdot v|]^2$ by:
  \begin{align*}
    \bounds(\eps) & = (l_0, r_0)\\
    \bounds(t_1t_2\cdots t_i) & =\mbox{\hspace{5cm}}\\
    \mathclap{\mbox{\hspace{4cm}}\begin{cases}
                                  (\rcut^n(\ell), r) & \text{if } t_i =
                                  \alice\\
                                  (\lcut^n(\ell), \rcut^n(r)) & \text{if } t_i
                                  = \neutral\\
                                  (\ell, \lcut^n(r)) & \text{if } t_i = \bob
                                \end{cases}}\\
    \mathclap{\mbox{\hspace{3cm}}\text{with } (\ell, r) = \bounds(t_1t_2\cdots
    t_{i-1}) \text{ and } n = 2^{k-i}.}
  \end{align*}

  \begin{assumption}
    We henceforth assume that if $(\ell, r) = \bounds(h)$ for some word
    $h \in T^{\leq k}$, then $|u|< \ell < r < |u|+N$.  This will have to be
    guaranteed by carefully picking $N$, $\ell_0$ and $r_0$.
  \end{assumption}

  The type of a position $p$ \emph{under type history}
  $t_1t_2\cdots t_i \in T^*$ is computed by first taking
  $(\ell, r) = \bounds(t_1t_2\cdots t_i)$, and reasoning as before: it is Alicic
  if $p \leq \ell$, Bobic if $p \geq r$, and Neutral otherwise.  This is well
  defined since $\ell < r$ by our Assumption.  The crucial property here is as
  follows:
  \begin{fact}\label{fact:crux}
    Let $p_1, p_2, \ldots, p_k$ be positions, and inductively define the type
    $t_i$ of $p_i$ as its type under type history $t_1t_2\cdots t_{i-1}$.
    \begin{enumerate}
    \item Two positions with different types do not form an edge in~$G$;
    \item All Alicic positions are strictly smaller than the Neutral ones, which
      are strictly smaller than the Bobic ones;
    \item All Neutral positions are labeled with the neutral letter.
    \end{enumerate}
  \end{fact}
  \begin{proof}
    \emph{(Points 1 and 2.)}\quad Suppose $p_i$ is Alicic and $p_j$ is Neutral,
    with $i < j$.  Let $(\ell, r) = \bounds(t_1t_2\cdots t_{i-1})$, we thus have
    that $p_i$ is maximally $\ell$.  Let
    $(\ell', r') = \bounds(t_1t_2\cdots t_{j-1})$, then $p_j$ is minimally
    $\ell'+1$.  By definition, once the types of $p_1, p_2, \ldots, p_i$ are
    fixed, the smallest $\ell'$ that can be obtained with the types $t_{> i}$ is
    by having all positions $p_t$, with $i < t < j$, Neutral.  In that case, an
    easy computation shows that $\ell'$ would be:
    \[\lcut^{2^{k-(j-1)}}(\lcut^{2^{k-(j-2)}}(\cdots
      (\lcut^{2^{k-(i+1)}}(\rcut^{2^{k-i}}(\ell)))\cdots))\enspace.\]
    That is, $\lcut$ is composed with itself $m$ times with:
    \begin{align*}
      m = 2^{k-(i+1)}+ \cdots + 2^{k-(j-1)} & < \sum_{s = i+1}^k 2^{k-s}\\
      & < 2^{k-i} = n\enspace.
    \end{align*}
    Hence $\ell'$ is at most $\lcut^m(\rcut^n(\ell))$ with $m < n$, and
    by Fact~\ref{fact:move}, $\ell' \geq \rcut(\ell)$.  Hence $(p_i, p_j)$
    is not an edge in $G$, and $p_i < p_j$.

    The other cases are similar.  For instance, if $p_i$ is Neutral and~$p_j$
    Bobic, with $i < j$, then, with the same notation as above, $\ell'$ can be
    at most $\lcut^m(\rcut^n(\ell))$, and by Fact~\ref{fact:move},
    $\ell' \geq \lcut(\ell)$.

    \emph{(Point 3.)}  This is a direct consequence of the Assumption.
    Consider $(\ell, r) = \bounds(\neutral^k)$; this provides
    the minimal $\ell$ and maximal $r$  between which a position can be labeled
    Neutral.  By the Assumption, $|u| < \ell < r < |u| + N$, hence a Neutral
    position has a neutral letter.
    \addtoqedsymbol{Fact~\ref{fact:crux}}
  \end{proof}

  We are now ready to present the protocol.  First, we rewrite
  quantifiers using Alicic/Neutral/Bobic \emph{annotated} quantifiers:
  \begin{itemize}
  \item $(\forall x)[\rho] \leadsto (\foralla x)[\rho] \;\land\; (\foralln
    x)[\rho] \;\land\; (\forallb x)[\rho]$,
  \item
    $(\exists x)[\rho] \leadsto (\existsa x)[\rho] \;\lor\; (\foralln x)[\rho] \;\lor\;
    (\existsb x)[\rho]$.
  \end{itemize}
  Let us further equip each node with the type history of the variables
  quantified before it; that is, each node holds a string
  $t_1t_2\cdots t_n \in T^{\leq k}$ where $t_i$ is the annotation of the $i$-th
  quantifier from the root to the node, excluding the node itself.

  Now if we were given the entire word $u \cdot e^N \cdot v$, a way to evaluate
  the formula that respects the semantic of ``Alicic'', ``Neutral'', and
  ``Bobic'' is as follows:

  \algblockdefx[foreach]{Foreach}{EndForeach}[1]{\textbf{foreach} #1 \textbf{do}}{\textbf{end}}
  \begin{algorithm}
    \caption{Formula Evaluation}
    \begin{algorithmic}[1]
      \Foreach{quantifier node $\foralla x$ or $\existsa x$}
        \State $(\ell, r) := \bounds(\text{type history at node})$
        \If{node is $\foralla x$}
          \State Replace node with $\bigwedge_{x=0}^{\ell}$
        \EndIf
        \State Similarly with $\exists$ becoming $\bigvee$
      \EndForeach  
      \State Evaluate the part of the leaves than can be evaluated
      \Foreach{quantifier node}
        \State $(\ell, r) := \bounds(\text{type history at node})$
        \If{node is $\foralln x$}
          \State Replace node with $\bigwedge_{x=\ell+1}^{r-1}$
        \ElsIf{node is $\forallb x$}
          \State Replace node with $\bigwedge_{x=r}^{|ue^Nv|}$
        \EndIf
        \State Similarly with $\exists$ becoming $\bigvee$
      \EndForeach
      \State Finish evaluating the tree
    \end{algorithmic}
    \label{alg}
  \end{algorithm}

  This is precisely the algorithm that Alice and Bob will execute.  First, Alice
  will quantify her variables according to the $\bounds$ of the type history of
  each node, as in Algorithm~\ref{alg}.  At the leaves, she will thus obtain the
  formula $\psi$, and have a set of quantified Alicic variables.  She can then
  evaluate $\psi$ \emph{partially}: if an atomic formula only relies on Alicic
  variables, she can compute its value.  If an atomic formula uses a mix of
  Alicic and non-Alicic variables, then she can \emph{also} evaluate it: if the
  formula is a numerical predicate, then by Fact~\ref{fact:crux}.1, it will be
  valued \emph{false}; if the formula is of the form $x < y$, then it is true
  iff $x$ is Alicic, by Fact~\ref{fact:crux}.2.  Alice now simplifies her tree:
  logically equivalent leaves with the same parent are merged, and inductively,
  each internal node keeps only a single occurrence per formula appearing as a
  child.  We remark that the semantic of the tree is preserved.  This results in
  a tree whose size depends \emph{solely} on $\phi$, and the values of $N$,
  $\ell_0$, and $r_0$, and Alice can thus send it to Bob.

  Bob will now expand the remaining quantifiers (Neutral and Bobic), respecting
  the bounds of the type history, as in Algorithm~\ref{alg}.  He can then
  evaluate all the leaves, since, by Fact~\ref{fact:crux}.3, the only letter
  predicate true of a Neutral position is that of the neutral letter.  This
  concludes the protocol, which clearly produces the same result as
  Algorithm~\ref{alg}.

  \def\tmin{\text{min}} \def\tmax{\text{max}}%
  \emph{What are $N$, $\ell_0$, $r_0$?} We check that Alice and Bob can agree on
  these values without communication.  The requirements were made explicit in
  our Assumption.  The values computed by the function $\bounds$ are obtained by
  applying $\lcut$ and $\rcut$ on $\ell_0$ and $r_0$ \emph{at most}
  $n = \sum_{i=0}^{k-1} 2^i$ times.  From Fact~\ref{fact:minl}, it is clear that
  any $(\ell, r) = \bounds(h)$, for $h \in T^{\leq k}$, verifies:
  \begin{itemize}
  \item $\ell_\tmin = \lcut^n(\ell_0) \leq \ell \leq \rcut^n(\ell_0) = \ell_\tmax$;
  \item $r_\tmin = \lcut^n(r_0) \leq r \leq \rcut^n(r_0) = r_\tmax$.
  \end{itemize}
  Hence we pick $\ell_0 = \rcut^{n+1}(|u|)$, ensuring, by Fact~\ref{fact:move},
  that $\ell_\tmin > |u|$.  Next, we pick $r_0$ to be $\rcut^{n+1}(\ell_\tmax)$,
  ensuring that $r_\tmin > \ell_\tmax$ by the same Fact~\ref{fact:move}.
  Finally, we pick $N = \rcut^{n+1}(r_0)$, ensuring, by Fact~\ref{fact:minl},
  that $N > r_\tmax$, so that in particular, $r_\tmax < |u| + N$.  We then
  indeed obtain that $|u| < \ell < r < |u| + N$, as required.  Note that these
  computations depend \emph{solely} on $\phi$ and the lengths of $u$ and~$v$.%
  \addtoqedsymbol{Theorem~\ref{thm:cbcfin}}
\end{proof}

\begin{remark}
  It should be noted that the crux of this proof is that a relation $R(x, y)$
  with $x$ Alicic and $y$ Neutral or Bobic can be readily evaluated by Alice.
  If $R$ were monadic, then it could not mix two positions of different types,
  hence Alice could still remove all of her variables at the end of her
  evaluation.  The rest of the protocol will be similar, with Bob quantifying
  the remaining positions.  This shows that $\FO[\leq, \mon, \fin]$ also has the
  Crane Beach Property.
\end{remark}

\section{On counting}\label{sec:count}

A compelling notion of computational power, for a logic, is the extent to which
it is able to precisely evaluate the \emph{number of positions} that verify a
formula.  This is formalized with the following standard definition:
\begin{definition}
  For a nondecreasing function $f(n) \leq n$, a logic is said to \emph{count up
    to $f(n)$} if there is a formula $\phi(c)$ in this logic such that for all
  $n$ and $w \in \binal^n$:
  \[w \models \phi(c) \quad \Leftrightarrow \quad c \leq f(n) \land c = \text{number of 1's
      in } w\enspace.\]
\end{definition}

It is known from~\cite{baimlascth05} that if a logic can count up to
$\log(\log(\cdots(\log n)))$, for some number of iterations of $\log$, then the
logic does not have the Crane Beach Property.  It has also been
conjectured~\cite{baimlascth05,lee03} that a logic has the Crane Beach Property iff it
cannot count beyond a constant.  It is not known whether there exists a set of
predicates $\calN$ such that $\FO[\calN]$ can count beyond a constant but not up
to $\log n$.

We define a much weaker ability:
\begin{definition}
  For a nondecreasing function $f(n) \leq n$, a logic is said to \emph{sum
    through $f(n)$} if there is a formula $\phi(a, b, c)$ in this logic such
  that for all $n$ and $w \in \binal^n$:
  \[w \models \phi(a, b, c) \quad \Leftrightarrow \quad a = b + f(c)\enspace.\]
\end{definition}

This is in general even weaker than being able to sum ``up to'' $f(n)$, that is,
having a formula expressing that $a = b + c$ and $c \leq f(n)$.  Naturally,
counting and summing are related:
\begin{proposition}\label{prop:counttosum}
  Let $\calN$ be a set of unvaried numerical predicates.  If $\FO[\leq, \calN]$
  can count up to $f(n)$, it can sum through $f(n)$.
\end{proposition}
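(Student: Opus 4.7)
The plan is to use the counting formula $\phi(y)$ in two distinct ways to synthesize the quantities $f(c)$ and $a - b$ as values of a single variable, and then to equate them.

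First I would build a formula $\chi(y, z)$ expressing $y \leq f(z)$. Starting from $\phi(y)$, I relativize every quantifier to the range $\{x : x < z\}$ (that is, $\forall x.\theta$ becomes $\forall x(x < z \to \theta)$ and similarly for $\exists$) and replace each occurrence of $\lt 1(x)$ by $x < y$, and each $\lt 0(x)$ by $x \geq y$. Because every predicate in $\calN$ is unvaried, its restriction to $[0, z-1]$ coincides with its interpretation on a genuine length-$z$ structure; hence the relativization faithfully evaluates $\phi(y)$ on a virtual word of length $z$ whose first $y$ positions carry the letter $1$ and whose remainder carry $0$. The number of $1$'s in this virtual word is $\min(y, z)$, so the resulting formula holds iff $y = \min(y, z)$ and $y \leq f(z)$, which simplifies to $y \leq f(z)$ using $f(z) \leq z$. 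From $\chi$, the formula
\[
  \eta(y, z) \;:=\; \chi(y, z) \land \forall e\,(e > y \to \neg \chi(e, z))
\]
expresses exactly $y = f(z)$, and the relevant values all lie within the universe since $y, e, f(z) \leq z < n$.

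Second I would build a formula $\rho(y, a, b)$ expressing $y = a - b$ by taking $\phi(y)$ without any relativization of its quantifiers, but with $\lt 1(x)$ replaced by $b < x \leq a$ (and $\lt 0(x)$ by its negation). Then $\phi$ counts the size of the interval $(b, a]$, which is $a - b$, subject to $y \leq f(n)$. Combining the pieces, the formula
\[
  \psi(a, b, c) \;\equiv\; \exists y\,\bigl[ \rho(y, a, b) \land \eta(y, c) \bigr]
\]
expresses $a = b + f(c)$: the existential quantifier forces $y = a - b$ and $y = f(c)$ simultaneously. The side condition $y \leq f(n)$ hidden inside $\rho$ is automatically met because $y = f(c) \leq f(n)$, using $c < n$ and the nondecreasing nature of $f$.

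The main obstacle is justifying the relativization step. Its validity depends crucially on $\calN$ consisting only of unvaried predicates: otherwise the behaviour of $\phi$ restricted to a sub-universe of $[0, n-1]$ would not match its behaviour on a genuinely shorter word, and the extraction of $f(z)$ via $\chi$ and $\eta$ would fail. A few boundary configurations (e.g., the edge case $y = n - 1$ inside $\eta$, or $a = b$ when $f(c) = 0$) should be inspected, but they can be handled by minor syntactic guards and present no real difficulty.
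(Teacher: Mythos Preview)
Your proposal is correct and follows essentially the same approach as the paper's proof: both relativize the counting formula to a sub-universe (using that the predicates in $\calN$ are unvaried) in order to extract the graph of $f$, and both replace the letter predicates by an interval test to obtain ``$a-b=c$ subject to $c\leq f(n)$'', then combine the two via an existential quantifier. The only cosmetic differences are that the paper uses the interval $[b,a)$ where you use $(b,a]$, and the paper phrases the graph of $f$ as ``$c'$ maximal such that $\phi'(c,c')$'' where you write it out with $\chi$ and a universal guard; neither affects the argument.
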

\begin{proof}
  Letting $\phi(c)$ be the formula that counts up to $f(n)$, we modify it into
  $\phi'(a, b, c)$ by changing the letter predicates to consider that there is a
  $1$ in position $p$ iff $b \leq p < a$.  This expresses that $a = b + c$
  provided that $c \leq f(n)$.

  Next, the graph $F$ of $f$ is obtained as follows.  First, modify $\phi(c)$
  into $\phi'(c, c')$, by restricting all quantifications to $c$ and replacing
  the letter predicates to have $1$'s in all positions below $c'$.  Second,
  $(c, c') \in F$ iff $c'$ is maximal among those that verify $\phi'(c, c')$.
  This relies on the fact that $\calN$ consists solely of unvaried predicates.

  The logic can then sum through $f(n)$ by:
  \[\psi(a, b, c) \equiv (\exists c')[F(c, c') \land a = b + c']\enspace.\qedhere\]
\end{proof}

\begin{remark}
  Proposition~\ref{prop:counttosum} depends crucially on the fact that the
  predicates are unvaried to show that the graph of the summing function is
  expressible.  Writing $\calS$ for the set of varied monadic predicates
  $S = (S_n)_{n\geq 0}$ with $|S_n| = 1$ for all $n$, it is easily shown that
  $\FO[\leq, +, \times, \calS]$ can count up to any function $\leq \log n$.
  However, we conjecture that there are functions whose graphs are not
  expressible in this logic.
\end{remark}

\begin{proposition}\label{prop:count}
  $\FO[\leq, \fin]$ cannot sum through beyond a constant.
\end{proposition}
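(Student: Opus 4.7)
The plan is to argue by contradiction via an Ehrenfeucht--Fra\"{\i}ss\'{e} game. Suppose $\phi(a, b, c) \in \FO*[\leq, \calN]$, for some finite $\calN \subseteq \fin$, expresses $a = b + f(c)$ for an unbounded nondecreasing $f$; let $q$ denote the quantifier depth of $\phi$.

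By a pigeonhole argument, there are only finitely many $\FO*[\leq, \calN]$ $1$-types of depth $q$, so some type has infinitely many representatives in $\bbn$. Since $f$ is unbounded, it cannot be constant on every type (else it would take only finitely many values), and hence some type contains positions $c_0 \neq c_1$ with $f(c_0) \neq f(c_1)$. I then choose $n$ very large, $b$, and $a := b + f(c_0)$ so that: (i) the pairwise distances among $a, b, c_0, c_1$ and the endpoints of $[n]$ all exceed $2^q$; (ii) $a, b$ lie outside the $\calN$-neighborhoods of $c_0, c_1$ (the finite sets of positions appearing with them in some $\calN$-tuple) and share no $\calN$-tuple with any element of those neighborhoods; (iii) the neighborhoods of $c_0$ and $c_1$ are disjoint. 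Each of (i)--(iii) amounts to avoiding a finite forbidden set, possible by the finite-degree property of $\calN$ together with the freedom provided by the pigeonhole. By construction, $\phi(a, b, c_0)$ holds while $\phi(a, b, c_1)$ does not.

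The core step is then to show that the structures $([n], \leq, \calN, a, b, c_0)$ and $([n], \leq, \calN, a, b, c_1)$ agree on every $\FO*[\leq, \calN]$-formula of depth $q$, yielding the desired contradiction. The duplicator's strategy in the $q$-round game uses the $q$-type equivalence of $c_0$ and $c_1$ to swap their roles within their local $\calN$-pictures, plays identity on elements far from those pictures, and uses the standard truncated-distance back-and-forth for $\leq$ on the remaining elements. The main obstacle is the joint preservation of the order (at resolution $2^q$) and the multi-arity $\calN$-relations across all $q$ rounds: the separation guaranteed by (i)--(iii) is precisely what ensures that no $\calN$-tuple bridges the region around $c_0, c_1$ and the region around $a, b$, so that the local strategies can be composed consistently throughout the game.
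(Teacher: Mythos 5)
Your route (a direct Ehrenfeucht--Fra\"{\i}ss\'e argument) is genuinely different from the paper's, but the step you yourself flag as ``the main obstacle'' is not a technicality to be discharged by standard back-and-forth: it is the whole mathematical content, and your separation conditions (i)--(iii) demonstrably do not secure it. Take $\calN = \{P\}$ with $P = \{(x, 2x) \mid x \in \bbn\}$, which is of finite degree. The quantifier-rank-$1$ formula $\theta(a,c) \equiv (\exists z)[P(c,z) \land z \leq a]$ distinguishes $(a,b,c_0)$ from $(a,b,c_1)$ whenever $2c_0 < a < 2c_1$; this configuration is compatible with all of (i)--(iii), since $a$ can avoid the finitely many points $2c_0, 2c_1, c_0/2, c_1/2, 4c_0, \ldots$, keep all pairwise distances among $a, b, c_0, c_1$ and the endpoints above $2^q$, and still sit between $2c_0$ and $2c_1$ (there is room of order $c_1 - c_0$ there). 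So the claimed $q$-equivalence already fails at rank~$1$. The underlying problem is that $c_0 \equiv_q c_1$ as \emph{logical} $q$-types yields only a global duplicator strategy for the single-pebble game, and such strategies do not compose with the extra pebbles $a, b$: a response they dictate may land on the wrong side of $a$ or $b$, or inside an $\calN$-tuple involving them. The usual repair---pigeonholing on local isomorphism types of neighborhoods \`a la Hanf---is unavailable here, because finite-degree neighborhoods are finite but of unbounded cardinality (so there are infinitely many local isomorphism types), and Hanf/Gaifman locality breaks anyway once $\leq$ is in the vocabulary. Section~\ref{sec:finindep} (the independence property of $\FO[\leq, \fin]$) is precisely the warning that off-the-shelf locality and collapse arguments will not carry this logic; making the order strategy and the $\calN$-strategy compose is what the link graph and the cut functions $\lcut, \rcut$ accomplish in the proof of Theorem~\ref{thm:cbcfin}.

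The paper sidesteps all of this with a short reduction that \emph{uses} Theorem~\ref{thm:cbcfin} instead of re-proving a locality statement: if $\phi(a,b,c)$ sums through an unbounded nondecreasing $f$, then the translate $\BIT' = \{(x, y) \mid (x, y - f(x)) \in \BIT\}$ is of finite degree (for fixed $x$ there are at most about $\log x$ admissible $y$; for fixed $y$ the condition $f(x) \leq y$ confines $x$ to a finite set), and $\BIT(x,y)$ holds iff $(\exists z)[\phi(z, y, x) \land \BIT'(x,z)]$, so $\BIT$ would be definable in $\FO[\leq, \fin]$, contradicting the Crane Beach Property. If you want to persist with a direct argument, you would need at least to (a) replace the type pigeonhole by one that also controls the order positions of the entire iterated $\calN$-neighborhoods of $c_0$ and $c_1$ relative to $a$, $b$, and each other, and (b) prove a composition lemma for order together with finite-degree relations---at which point you are rebuilding the machinery of Section~\ref{sec:cbcfin} rather than giving a shorter proof.
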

\begin{proof}
  Suppose for a contradiction that $\FO[\leq, \fin]$ can sum through a
  nondecreasing unbounded function $f$ using a formula $\phi(a, b, c)$.  Let
  $\BIT$ be the binary predicate true of $(x, y)$ if the $y$-th bit of $x$ is
  $1$.  We define a translated version as:
  \[\BIT' = \{(x, y) \mid (x, y - f(x)) \in \BIT\}\enspace.\]
  We show that $\BIT'$ is of finite degree.  Let $n \in \bbn$, and suppose $(n, y)
  \in \BIT'$.  This implies in particular that $0 < y - f(n) < \log n$, hence $n$
  appears a finite number of time as $(n, y)$ in $\BIT'$.  Suppose $(x, n) \in
  \BIT'$, then $n - f(x) > 0$, but for $x$ large enough, $f(x) > n$, hence there
  can only be a finite number of pairs $(x, n)$ in $\BIT'$.

  Now $\BIT$ can be defined in $\FO[\leq, \fin]$ using $\phi$, since
  $\BIT(x, y)$ holds iff $(\exists z)[\phi(z, y, x) \land \BIT'(x, z)]$, a
  contradiction concluding the proof.
\end{proof}

\begin{corollary}
  $\FO[\leq, \fin]$ cannot count beyond a constant.
\end{corollary}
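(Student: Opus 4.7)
The corollary is an immediate consequence of the two preceding propositions, so the plan is essentially to chain them together by contradiction. Suppose, toward a contradiction, that $\FO[\leq, \fin]$ can count up to some nondecreasing unbounded function $f(n) \leq n$. My first step is to observe that $\fin$ consists, by definition, of \emph{unvaried} predicates, so the hypothesis of Proposition~\ref{prop:counttosum} is satisfied with $\calN = \fin$.

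Next, I apply Proposition~\ref{prop:counttosum} to conclude that $\FO[\leq, \fin]$ can sum through $f(n)$. Since $f$ is unbounded, this directly contradicts Proposition~\ref{prop:count}, which says that $\FO[\leq, \fin]$ cannot sum through any function beyond a constant. Hence no such $f$ exists, and the counting ability of $\FO[\leq, \fin]$ is bounded by a constant.

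There is really no obstacle here; the main work was already done in establishing the two propositions. The only thing worth double-checking is the quantifier phrasing: ``count beyond a constant'' means there exists an unbounded $f$ up to which the logic counts, so negating this gives that every $f$ witnessing counting is bounded, which is precisely what the contradiction with Proposition~\ref{prop:count} (stated for nondecreasing unbounded $f$) delivers.
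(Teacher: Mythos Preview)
Your argument is correct and is exactly the intended one: the paper states the corollary without proof precisely because it follows immediately by chaining Proposition~\ref{prop:counttosum} (applicable since $\fin \subseteq \arbu$) with Proposition~\ref{prop:count}. There is nothing to add.
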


\section{Conclusion}
We showed that $\FO[\leq, \fin]$ is one simple predicate away from expressing
all of $\FO[\arb]$, and that it exhibits the Crane Beach Property.  This logic
is thus really on the brink of a crevice on the Crane Beach, and exemplifies a
diverse set of behaviors that fit the intuition that neutral letters should
render numerical predicates essentially useless.  We emphasize some future
research directions:
\begin{itemize}
\item As a consequence of our results, one can show that a nonregular neutral
  letter language $L$ is not in $\ACz$ as follows.  Assume $L \in \ACz$ for a
  contradiction, and let $\phi \in \FO[\leq, \MSBz, \fin]$ be a formula
  expressing it.  Suppose that one can show that $\phi$ can be rewritten
  \emph{without} the predicate $\MSBz$, then $L \in \FO[\leq, \fin]$, and thus
  $L$ is regular, a contradiction.  We hope to be able to apply this strategy in
  the future.
\item As noted in~\cite{roy-straubing07} and~\cite{baimlascth05} and studied in
  particular in~\cite{krebs12}, the interest in circuit complexity calls for the
  study of logics with more sophisticated quantifiers, notably modular
  quantifiers and, more generally, monoidal quantifiers.  Hence the natural
  question here is whether $\FOMOD[\leq, \fin]$ has the Crane Beach Property.
\item As asked in~\cite{baimlascth05}, can we dispense from our implicit reliance
  on the lower bound $\parity \notin \ACz$?  In the cases of~\cite{baimlascth05},
  and as noted by the authors, this would be very difficult, as their results
  \emph{imply} the lower bound.  Here, the strong Crane Beach Property for
  $\FO[\leq, \fin]$ does not directly imply the lower bound.  To show that
  $\parity \notin \ACz$, one could \emph{additionally} prove that all the
  regular, neutral letter languages of $\FO[\leq, \MSBz, \fin]$ are in
  $\FO[\leq, \fin]$---we know that this statement holds, but only thanks to
  $\parity \notin \ACz$.
\item Are we really on the brink of falling off the Crane Beach?  That is, are
  there unvaried predicates that cannot be expressed in $\FO[\leq, \fin]$ but
  can still be added to the logic while preserving the Crane Beach Property?  We
  noted that all \emph{varied} monadic predicates can be added safely, but
  already very simple predicates falsify the Crane Beach Property.  For
  instance, with $F$ the graph of the 2-adic valuation, $\FO[\leq, F]$ is as
  expressive as $\FO[\leq, +, \times]$ (see \cite[Theorem 3]{schwentick98}),
  which does not have the Crane Beach Property~\cite{baimlascth05}.
\item Numerical predicates correspond in a precise sense~\cite{bela06} to the
  computing power allowed to construct circuit families for a language.  Is
  there a natural way to present $\FO[\leq, \fin]$-uniform circuits?
\end{itemize}

\section*{Acknowledgment}

The authors would like to thank Thomas Colcombet, Arnaud Durand, Andreas Krebs,
and Pierre McKenzie for enlightening discussions, and Michael Blondin for his
careful proofreading.

\nocite{straubing94}

\bibliographystyle{IEEEtran}
\bibliography{bib}

\end{document}